\documentclass[prb,preprint,aps,longbibliography,groupedaddress]{revtex4-1}

\usepackage{bbm}
\usepackage{graphicx}
\usepackage{subfigure}
\usepackage{amsmath}
\usepackage{booktabs}
\usepackage{ulem}
\usepackage{xcolor}
\usepackage{pifont}
\usepackage{amssymb}
\newtheorem{lemma}{Lemma}
\newtheorem{proposition}{Proposition}

\begin{document}

\title{Intrinsic spectral structure of bipartite nonlocal magic resource}

\author{Xiao Huang$^{1}$, Guanhua Chen$^{1}$and Yao Yao$^{1,2}$\footnote{Electronic address:~\url{yaoyao2016@scut.edu.cn}}}

\address{$^1$ Department of Physics, South China University of Technology, Guangzhou 510640, China\\
$^2$ State Key Laboratory of Luminescent Materials and Devices, South China University of Technology, Guangzhou 510640, China}

\date{\today}

\begin{abstract}
Bipartite nonlocal magic resource (BNMR) quantifies the irreducible nonstabilizerness residing in bipartite entanglement, yet its evaluation is intractable due to the full Hilbert space optimization. Here, we introduce a canonical encoding framework that exactly confines the BNMR of an arbitrary bipartite pure state within a minimal encoding core. This dimension reduction proves that pure-state BNMR is an intrinsic function of the nonzero Schmidt spectrum, extending its invariance from local unitary transformations to local isometries. Leveraging this spectral link, we derive the leading quadratic response of BNMR under spectral perturbations around its zeros. We apply this quadratic response to Haar-random states, deriving and numerically validating the BNMR profile: its distribution is sharply localized at the symmetric bipartition and exponentially suppressed toward asymmetric cuts, in stark contrast to the broadening Page curve of entanglement. Finally, we provide a closed-form expression for the BNMR of Schmidt rank-2 states, uncovering a hierarchy collapse in generalized GHZ states where bipartite and global nonlocal magic resources coincide exactly.
\end{abstract}

\maketitle

\section{Introduction}
Entanglement and nonstabilizerness (magic) are two fundamental quantum resources~\cite{resource,Veitch_2014}, playing central roles in universal quantum computation ~\cite{Kitaev2005,DiVincenzo_2000,Raussendorf2001} and in the characterization of quantum many-body complexity~\cite{Gottesman1998,MPS1,ent,Cao2021}. Compared with the long-standing investigation of entanglement theory~\cite{Werner1989,PPT,Bennett1996,Quantum-entanglement}, the study of magic resource in many-body systems has developed rapidly in recent years~\cite{Oliviero2022_experimental,2025experimental}. This progress has been largely stimulated by the introduction of the Stabilizer R\'{e}nyi Entropy (SRE)~\cite{SRE,SRE2,SRE1}, together with
the numerical and experimental protocols that make SRE computable and measurable in scalable systems~\cite{flatness,Oliviero2022_experimental,2025experimental,Pauli_sampling,Pauli_Markov,magicMPS,Xiao2026exponential,Huang2026fast,Sierant_2026}. These advances have made it possible to investigate many-body phenomena from the perspective of nonstabilizerness~\cite{ManyBodymagic}, including quantum dynamics~\cite{Turkeshi_2025,magicMBL,extensive3,PXPmagic,Leone2021} and quantum phase transitions~\cite{magicMPS,Pauli_Markov,extensive4,monitor1,MIPT,long_range,Ellison2021,Oliviero2022}.

Concurrently, a crucial line of research has focused on the interplay between magic resource and entanglement~\cite{flatness,bond,haar_orbit,MIPT,Haar_scaling}. One of the central insights from this effort is that entanglement fundamentally acts as a container for magic resource~\cite{ManyBodymagic,cup,2025experimental}, motivating the concept of nonlocal magic resource~\cite{Cao21,Cao,BNM,mutual_magic,link2026,Paolo2026,munizzi2026}. Compared with entanglement alone, nonlocal magic resource captures an explicit form of quantum complexity, arising from the coupling of nonlocality and nonstabilizerness. This hierarchy is evident from the circuit-depth formulation of long-range nonstabilizerness: while long-range nonstabilizerness necessarily entails long-range entanglement, the converse is false~\cite{long_range}. Furthermore, the nonlocal magic resource encoded in generalized W-states structures has been shown to detect quantum phase transitions that remain invisible to entanglement probes~\cite{W1,W2}. Taken together, these results highlight the central role of nonlocal magic resource in characterizing quantum complexity beyond entanglement and motivate the need for its precise quantification.

Nonlocal magic resource can be formulated rigorously via local unitary (LU) optimization~\cite{Cao,BNM} to capture the nonstabilizerness surviving local basis changes. It thereby isolates the irreducible nonstabilizerness genuinely encoded within entanglement. This formulation applies naturally to different subsystem partitions, and it is inherently LU-invariant, consistent with the basic requirement that a nonlocal notion should remain stable under local basis transformations. However, the optimization formulation introduces a catastrophic computational complexity. Although a recent numerical work utilizes specific ansatz~\cite{ansatz1,ansatz2} to evaluate global nonlocal magic resource in certain many-body systems~\cite{PXPmagic}, evaluating bipartite nonlocal magic resource (BNMR) poses a steeper barrier due to the full Hilbert space optimization. Therefore, BNMR is intractable beyond a few qubits ~\cite{BNM,2qutrit}. Very recently, two independent works bypassed this bottleneck in free fermion models~\cite{NMR1,NMR2}. By restricting the optimization space to local Gaussian unitary operations, these authors derived a closed-form expression for the BNMR of pure fermionic Gaussian states, allowing them to evaluate the BNMR across various quantum many-body models.

In this work, we introduce a canonical encoding framework that compresses the nontrivial bipartite correlation structure of an arbitrary pure state into a minimal encoding core. We demonstrate that the BNMR of the original state can be exactly recovered from this core, reflecting its compact feature. This dimension reduction further establishes that pure-state BNMR is a universal function of the nonzero Schmidt spectrum, thereby extending its LU invariance to local-isometry invariance. The paper is organized as follows. The next Section gives a formal definition for the BNMR. Section III describes the canonical encoding procedure. Section IV expresses the spectral structure and dependence of BNMR, where the main propositions are proven. These propositions are applied to three typical states in Section V, where Haar-random states, Schmidt rank-2 states and generalized GHZ states are taken as prototypes. The conclusions are drawn in the final section.


\section{Stabilizer entropy and nonlocal magic resource}
We consider a quantum system of $N$ qubits. Let $\mathcal{P}_{N} = \{I, X, Y, Z\}^{\otimes N}$ denote the set of $N$-qubit Pauli operator strings, where each element is identified as $P = \prod_{j=1}^{N} P_{j}$. For a pure state $\rho$, a well-defined measure of the magic resource is the \textit{stabilizer 2-R\'{e}nyi entropy} (SRE) ~\cite{SRE}:
\begin{equation}
M_{2}(\rho) = -\ln \sum_{P \in \mathcal{P}_{N}} (1/2^{N}) \text{Tr}[\rho P]^{4}.
\label{eq:SRE}
\end{equation}
By definition, ${M_{2}}$ serves as a faithful measure that vanishes for stabilizer states and remains positive otherwise. Furthermore, it is fundamentally invariant under Clifford transformations, satisfying ${M_{2}}(U_{\rm C}\rho U_{\rm C}^{\dagger}) = {M_{2}}(\rho)$, and scales additively across composite systems, such that ${M_{2}}(\rho \otimes \rho') = {M_{2}}(\rho) + {M_{2}}(\rho')$. Throughout this work, we quantify the magic resource using the SRE at R\'{e}nyi index $2$, which is a magic resource monotone ~\cite{SRE2,SRE1}. We evaluate this quantity via the replica matrix product state (MPS) approach ~\cite{magicMPS}, which yields numerically exact results.

Nonlocal magic resource quantifies the irreducible nonstabilizerness that cannot be removed by local basis changes, thereby isolating the magic resource genuinely residing in entanglement. In this work, we focus on the BNMR defined as ~\cite{Cao,BNM}
\begin{equation}
M_{\text{bp}}(\rho_{AB}) = \min_{U= U_{A}\otimes U_{B}} M_{2}(U \rho_{AB} U^\dagger),
\label{eq:NLMR}
\end{equation}
where $A$ and $B$ denote the subsystems defined by a given bipartition. By optimizing over all LU operations, $M_{\text{bp}}$ removes the magic contributions localized within the individual subsystems. The resulting BNMR thereby captures the irreducible quantum complexity arising from the interplay between nonstabilizerness and entanglement.

As a natural $N$-partite extension, the global nonlocal magic resource captures the total amount of nonstabilizerness genuinely stored within the full entanglement network across the entire system. It is defined by minimizing the SRE over local unitary operations applied to each individual site:
\begin{equation}
M_{\text{NL}}(\rho) = \min_{U=\bigotimes_{i=1}^N U_{i}} M_{2}(U \rho U^\dagger).
\label{eq:totalNLMR}
\end{equation}

\section{Canonical encoding}
We consider an arbitrary pure state $|\Psi\rangle_{AB}$ of a many-body system with local dimension $d=2$, with a fixed cut into subsystems $A$ and $B$, containing $n_A$ and $n_B$ qubits, respectively. Assuming the Schmidt rank across this cut is $r$, the state admits the Schmidt decomposition
\begin{equation}
\label{eq:schmidt_chi}
|\Psi\rangle_{AB}
=
\sum_{\mu=1}^{r}\sqrt{\lambda_\mu}\,
|u_\mu\rangle_A \otimes |v_\mu\rangle_B,
\end{equation}
where $\lambda_\mu>0$, $\sum_{\mu=1}^{r}\lambda_\mu=1$, and
$\{|u_\mu\rangle_A\}$ and $\{|v_\mu\rangle_B\}$ are orthonormal Schmidt bases in subsystems $A$ and $B$.

Since the BNMR is invariant under local unitary operations, any state locally equivalent to $|\Psi\rangle_{AB}$ yields the same BNMR. We can choose a canonical representative that compresses the nontrivial bipartite entanglement structure within a minimal encoding sector, decoupling the remaining degrees of freedom into local pure ancillary states.

Specifically, let $k = \lceil \log_2 r \rceil$ be the minimum number of qubits required to support this Schmidt structure. We can identify a $k$-qubit encoding sector ($A_{\mathrm{code}}, B_{\mathrm{code}}$) and an ancillary sector ($A_{\mathrm{anc}}, B_{\mathrm{anc}}$) in each subsystem. There exist local unitary operations $V_A$ and $V_B$ that map the Schmidt bases into the computational bases of the encoding sectors, factoring out the ancillary qubits into the $|0\rangle$ state:
\begin{equation}
\label{eq:V_encode}
\begin{aligned}
V_A |u_\mu\rangle_A
&=
|\mu\rangle_{A_{\mathrm{code}}}
\otimes
|0\rangle^{\otimes (n_A-k)}_{A_{\mathrm{anc}}},\\
V_B |v_\mu\rangle_B
&=
|\mu\rangle_{B_{\mathrm{code}}}
\otimes
|0\rangle^{\otimes (n_B-k)}_{B_{\mathrm{anc}}},\\
\end{aligned}
\end{equation}
for $\mu=1,\dots,r$, where $|\mu\rangle_{A_{\mathrm{code}}}$ and $|\mu\rangle_{B_{\mathrm{code}}}$ are computational-basis states of the corresponding $k$-qubit sectors. If $r<2^k$, only the first $r$ basis states are occupied. Applying this local unitary transformation $V_A\otimes V_B$ to Eq.~\eqref{eq:schmidt_chi}, we obtain the encoded form
\begin{equation}
\label{eq:encoded_form}
(V_A\otimes V_B)|\Psi\rangle_{AB}
=
\left(
\sum_{\mu=1}^{r}\sqrt{\lambda_\mu}\,
|\mu\rangle_{A_{\mathrm{code}}}\otimes|\mu\rangle_{B_{\mathrm{code}}}
\right)
\otimes
|0\rangle^{\otimes (n_A-k)}_{A_{\mathrm{anc}}}
\otimes
|0\rangle^{\otimes (n_B-k)}_{B_{\mathrm{anc}}}.
\end{equation}

This representation naturally factorizes the state into a nontrivial bipartite core and a product of local pure ancillary states. By defining the core state on the minimal encoding sector as $|\phi_{r}\rangle_{\rm code} = \sum_{\mu=1}^{r}\sqrt{\lambda_\mu}\, |\mu\rangle_{A_{\mathrm{code}}}\otimes|\mu\rangle_{B_{\mathrm{code}}}$, and the corresponding ancillary state as $|\overline{0}\rangle_{\rm anc} = |0\rangle^{\otimes (n_A-k)}_{A_{\rm anc}} \otimes |0\rangle^{\otimes (n_B-k)}_{B_{\rm anc}}$, the canonical representative of the local-unitary equivalence class is given by $|\phi_{r}\rangle_{\rm code}\otimes|\overline{0}\rangle_{\rm anc}$, as schematically illustrated in Fig.~\ref{fig:LU}. Owing to the local-unitary invariance of the BNMR, we obtain
\begin{equation}
\label{eq:LU_invariance_chi}
M_{\mathrm{bp}}(|\Psi\rangle_{AB})
=
M_{\mathrm{bp}}
\!\left(|\phi_{r}\rangle_{\rm code}
\otimes
|\overline{0}\rangle_{\rm anc}
\right).
\end{equation}

\begin{figure}[htbp]
\centering
\fbox{\includegraphics[scale=0.65]{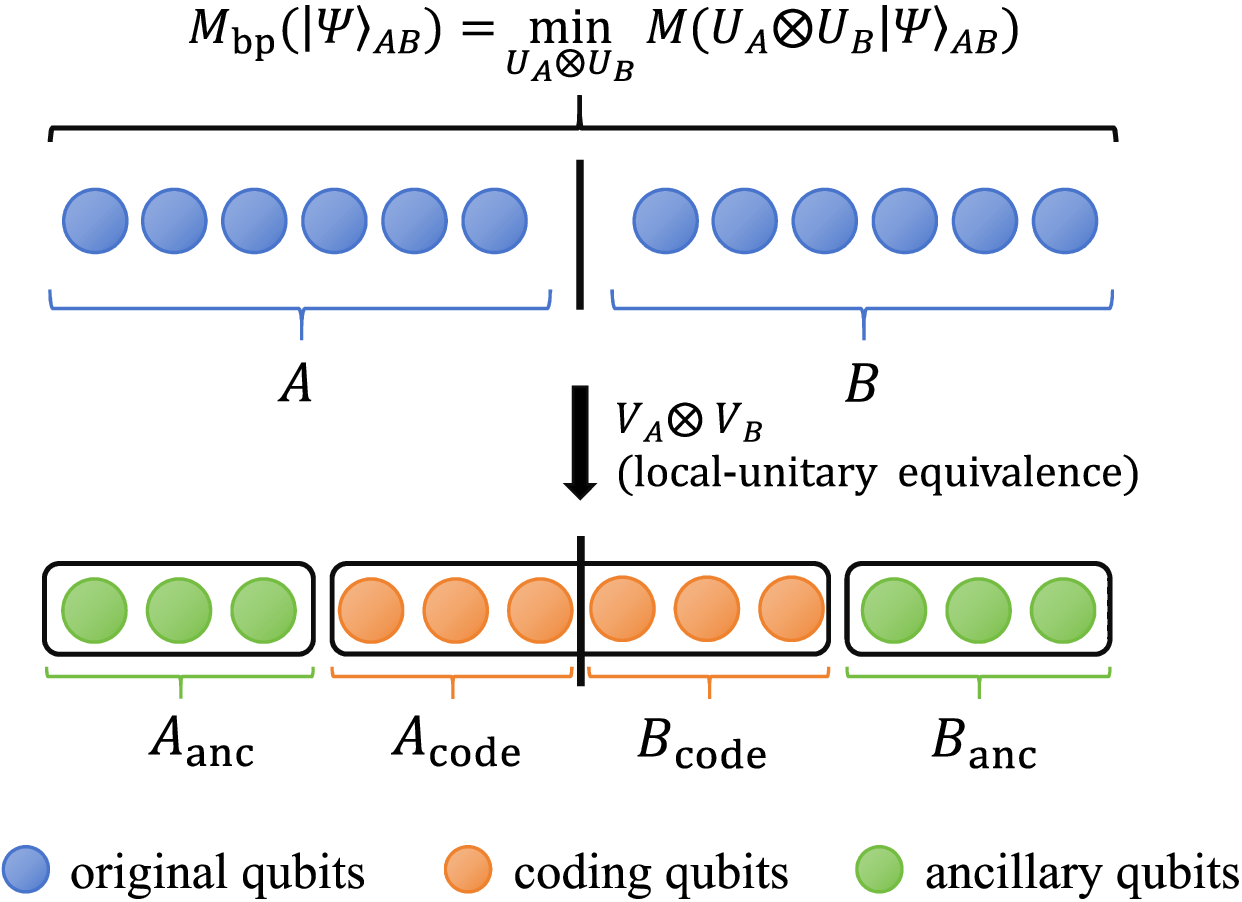}}
\caption{Schematic of the local-unitary operation reduction to the canonical encoding. An arbitrary bipartite state $|\Psi\rangle_{AB}$ (blue) is mapped to a canonical form $|\phi_{r}\rangle_{\rm code}\otimes |\overline{0}\rangle_{\rm anc}$ via $V_A \otimes V_B$. The nontrivial bipartite correlations are compressed into the minimal encoding sectors $A_{\mathrm{code}}$ and $B_{\mathrm{code}}$ (orange), decoupling the remaining degrees of freedom into local ancillary sectors $A_{\mathrm{anc}}$ and $B_{\mathrm{anc}}$ (green). }
\label{fig:LU}
\end{figure}

We now compare the BNMR of the full canonical representative, $M_{\mathrm{bp}}(|\phi_{r}\rangle_{\rm code}\otimes|\overline{0}\rangle_{\rm anc})$, with that of the minimal encoding core, $M_{\mathrm{bp}}(|\phi_{r}\rangle_{\rm code})$. Since BNMR is defined via minimization, one might mathematically suspect that expanding the optimization domain to the enlarged Hilbert space could yield a lower value than restricting it to the core sector. However, its intrinsic and compact nature excludes this possibility. The ancillary factor $|\overline{0}\rangle_{\rm anc}$ is a direct product of local pure stabilizer states. In the axiomatic frameworks of both entanglement and magic resource theories, appending or discarding local stabilizer states is a fundamental free operation~\cite{resource,LOCC,ManyBodymagic,Veitch2012}. Because BNMR captures the irreducible complexity integrating both resources, as quantified by faithful SRE, it must remain invariant under such joint free operations. Therefore, these redundant local degrees of freedom provide no nontrivial leverage to lower this intrinsic measure, dictating the equality:
\begin{equation}
\label{eq:encoding_equal_full}
M_{\mathrm{bp}}
\!\left(
|\phi_{r}\rangle_{\rm code}\otimes|\overline{0}\rangle_{\rm anc}
\right)
=
M_{\mathrm{bp}}
\!\left(
|\phi_{r}\rangle_{\rm code}
\right).
\end{equation}
Combining this with Eq.~\eqref{eq:LU_invariance_chi}, we achieve the dimension reduction:
\begin{equation}
\label{eq:final_chi}
M_{\mathrm{bp}}(|\Psi\rangle_{AB})
=
M_{\mathrm{bp}}
\!\left(
|\phi_{r}\rangle_{\rm code}
\right).
\end{equation}

Thus, for any pure many-body state under a fixed bipartition, all degrees of freedom relevant to the BNMR can be confined to a localized core comprising $k = \lceil \log_2 r \rceil$ qubits on each side. By compressing the rank-$r$ Schmidt structure into this minimal encoding sector, we bypass the global optimization across the full Hilbert space. This reduction is highly advantageous for evaluating the BNMR of low-entanglement many-body systems, including area-law ground states of local Hamiltonians ~\cite{area1,area2,area3}, as well as genuinely multipartite entangled states such as GHZ- and W-class states ~\cite{W-state,GHZ,CKW}.

\section{Spectral universality of BNMR: Local-isometry invariance}
We further show that the pure-state BNMR is a universal function of the nonzero Schmidt spectrum, extending its invariance beyond ordinary local unitary operations to arbitrary local isometries across the bipartition.

\begin{proposition} \label{prop:universal_BNMR}
For any pure bipartite state $|\Psi\rangle_{AB}$, its BNMR is an intrinsic function of the ordered nonzero Schmidt spectrum $\boldsymbol{\lambda}=(\lambda_1,\ldots,\lambda_r)$, expressed as $M_{\mathrm{bp}}(|\Psi\rangle_{AB})
=f_{\mathrm{bp}}(\boldsymbol{\lambda})$.
\end{proposition}
\noindent \textbf{Proof of Proposition 1.} Consider two bipartite pure states $|\Psi\rangle_{AB}$ and $|\Phi\rangle_{A'B'}$ with the same ordered nonzero Schmidt spectrum
$\boldsymbol{\lambda}=(\lambda_1,\ldots,\lambda_r)$. After the canonical encoding, both states can be represented by the same minimal bipartite core
\begin{equation}
\label{eq:spectral_core}
|\phi_{\boldsymbol{\lambda}}\rangle_{\rm code}
=
\sum_{\mu=1}^{r}
\sqrt{\lambda_\mu}\,
|\mu\rangle_{A_{\rm code}}\otimes|\mu\rangle_{B_{\rm code}} .
\end{equation}
Therefore, by Eq.~\eqref{eq:final_chi}, their BNMR must be identical:
\begin{equation}
\label{eq:spectral_invariance}
M_{\mathrm{bp}}(|\Psi\rangle_{AB})
=
M_{\mathrm{bp}}(|\Phi\rangle_{A'B'}) .
\end{equation}
It follows that, for pure states, the BNMR is a universal function of the nonzero Schmidt spectrum,
\begin{equation}
\label{eq:spectral_function}
M_{\mathrm{bp}}(|\Psi\rangle_{AB})
=
f_{\mathrm{bp}}(\boldsymbol{\lambda}) .
\end{equation}

This result demonstrates that BNMR is fundamentally determined by the entanglement structure. Crucially, since the proof places no restrictions on the respective Hilbert space dimensions of $|\Psi\rangle_{AB}$ and $|\Phi\rangle_{A'B'}$, it establishes that pure-state BNMR is inherently independent of the overall system size. This independence formally extends the established local-unitary invariance of BNMR to a broader local-isometry invariance, providing microscopic support for the recent view that entanglement structures can host magic resources~\cite{cup,Huang}. Note that $f_{\mathrm{bp}}$ should be regarded as a spectral map rather than as an explicit closed-form function, since evaluating $f_{\mathrm{bp}}(\boldsymbol{\lambda})$ generally requires solving a nontrivial optimization problem ~\cite{BNM,2qutrit}.

While both BNMR and entanglement entropy are functions of $\boldsymbol{\lambda}$, they characterize fundamentally different spectral features. This distinction is clearly illustrated in two-qubit pure states, where the closed-form expression for BNMR exhibits a non-monotonic dependence on the majorization order of the Schmidt spectrum ~\cite{BNM}. Since this order determines pure-state LOCC convertibility in the two-qubit case, this behavior rules out BNMR as an LOCC monotone.

While previous studies primarily connected bipartite entanglement to the total magic resource of the full system, we reveal that the genuine connection lies between entanglement and BNMR, with the Schmidt spectrum serving as the fundamental bridge. This structural connection provides a precise microscopic mechanism for macroscopic observations, such as the link between total magic resource and entanglement spectrum flatness averaged over a Clifford orbit~\cite{flatness}. Viewed through our framework, as a state evolves along the orbit, nonlocal operations iteratively induce fluctuations in BNMR across the given bipartition. Owing to the universal functional dependence of BNMR on the Schmidt spectrum, these dynamic variations must be accompanied by corresponding structural changes in the entanglement spectrum. Consequently, the macroscopic orbit-averaged response is fundamentally a statistical manifestation of this microscopic mapping.

The spectral nature of BNMR allows its zero-resource sector to be characterized directly within the Schmidt simplex. Cao \textit{et al.} proved that the zero set of BNMR corresponds exactly to the set of stabilizer-compatible flat spectra ~\cite{Cao}.
\begin{lemma}\label{lemma:zero_BNMR}
A pure bipartite state $|\Psi\rangle_{AB}$ has vanishing BNMR if and only if its Schmidt spectrum $\boldsymbol{\lambda}$ is flat and stabilizer-compatible. Here, flatness means that all nonzero Schmidt eigenvalues are equal, while stabilizer compatibility requires the Schmidt rank to satisfy $r=2^n$ for some $n\in\mathbb{Z}_{\geq 0}$.
\end{lemma}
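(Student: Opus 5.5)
The plan is to prove the two directions separately, using the dimension reduction of Eq.~\eqref{eq:final_chi} and Proposition~\ref{prop:universal_BNMR} to work only with the spectral core $|\phi_{\boldsymbol{\lambda}}\rangle_{\rm code}$.

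\emph{Sufficiency.} Suppose $\lambda_\mu = 2^{-n}$ for $\mu=1,\dots,2^n$. Then the core is $2^{-n/2}\sum_{\mu}|\mu\rangle|\mu\rangle$. This is a tensor product of $n$ Bell pairs $(|00\rangle+|11\rangle)/\sqrt2$, each shared between $A_{\rm code}$ and $B_{\rm code}$, so it is a stabilizer state. Hence $M_2=0$ already at the identity local unitary. Since $M_2\ge 0$, we get $M_{\rm bp}=0$, and Eq.~\eqref{eq:final_chi} transfers this to $|\Psi\rangle_{AB}$.

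\emph{Necessity.} Suppose $M_{\rm bp}(|\Psi\rangle_{AB})=0$. The minimization in Eq.~\eqref{eq:NLMR} runs over the compact group $U(2^{n_A})\times U(2^{n_B})$, and $M_2$ is continuous, so the minimum is attained. There is therefore some $U_A\otimes U_B$ with $M_2\big((U_A\otimes U_B)|\Psi\rangle\big)=0$. Faithfulness of the SRE then makes $|S\rangle=(U_A\otimes U_B)|\Psi\rangle$ a stabilizer state. Local unitaries preserve Schmidt coefficients, so it remains to show that every stabilizer state has a flat Schmidt spectrum of rank $2^n$.

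For that step, let $G$ be the stabilizer group of $|S\rangle$, with $|G|=2^N$. Write $G_A=\{g\in G: g=P_A\otimes I_B\}$. Then
\begin{equation}
\rho_A=\mathrm{Tr}_B|S\rangle\langle S| = 2^{-n_A}\sum_{g\in G_A} g|_A ,
\end{equation}
because the partial trace kills every Pauli string that acts nontrivially on $B$. Since $G_A$ is an abelian group of commuting Paulis not containing $-I$, the operator $|G_A|^{-1}\sum_{g\in G_A}g|_A$ is a projector of rank $2^{n_A}/|G_A|$. So $\rho_A$ is proportional to a projector, with rank $r=2^{n_A-\log_2|G_A|}$, a power of two. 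All nonzero eigenvalues equal $1/r$, which is the flat, stabilizer-compatible condition.

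I expect the main obstacle to be this stabilizer-group computation. Specifically, one must check that $|G_A|$ is a power of two, which holds because $G_A$ is a subgroup of $G$, and that the averaged sum is indeed a projector. Both follow from standard group-averaging arguments; if needed, one can instead invoke the known result (Fattal \emph{et al.}) that stabilizer states are LU-equivalent to products of Bell pairs and local stabilizer states. The attainment of the minimum should also be stated explicitly, since otherwise $M_{\rm bp}=0$ would only yield a sequence of states approaching stabilizer states. That case also closes, because the set of stabilizer states is finite up to the compact LU orbit and the Schmidt spectrum depends continuously on the state.
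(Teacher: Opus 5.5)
Your proposal is correct and follows the same route as the paper, which justifies the lemma in one line by citing the faithfulness of the SRE and the local equivalence of bipartite stabilizer states to independent Bell pairs. Your explicit attainment of the minimum and your direct computation $\rho_A = 2^{-n_A}\sum_{g\in G_A} g|_A$, which gives a flat spectrum of power-of-two rank, supply the steps the paper only cites. In the sufficiency direction you do not need Eq.~\eqref{eq:final_chi}, whose equality part rests on a heuristic argument in the paper: the canonical representative $|\phi_r\rangle_{\rm code}\otimes|\overline{0}\rangle_{\rm anc}$ is itself a stabilizer state, so plain local-unitary invariance already gives $M_{\mathrm{bp}}=0$.
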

This result follows directly from the faithfulness of pure-state magic resource measures ~\cite{SRE1,BNM} and the local equivalence of bipartite stabilizer states to $q$ independent Bell pairs ~\cite{stabilizer}. It establishes a strict boundary for the zero-resource sector: any pure state whose Schmidt spectrum deviates from perfect flatness, or whose rank is not a power of two, necessarily carries non-zero BNMR.

Having established this zero-resource boundary, a natural question is how BNMR scales under small deviations from stabilizer-compatible flat spectra. While Cao \textit{et al.} derived the leading-order expansion of an upper bound for BNMR in this near-flat regime ~\cite{Cao}, the corresponding behavior of BNMR itself has remained elusive. In Proposition~\ref{prop:perturbative_BNMR}, we close this gap by deriving its matching perturbative response, thereby proving that the existing upper bound is leading-order tight in the fixed-rank near-flat regime.

\begin{proposition} \label{prop:perturbative_BNMR}
Let $|\Psi\rangle_{AB}$ be a pure bipartite state with a fixed Schmidt rank $r=2^n$, where $n\geq 1$. Suppose that its nonzero Schmidt spectrum $\boldsymbol{\lambda}$ is a perturbation of the stabilizer-compatible flat spectrum $\boldsymbol{u}_r = (1/r, \dots, 1/r)$, namely, $\boldsymbol{\lambda} = \boldsymbol{u}_r + \boldsymbol{\eta}$. Here, $\boldsymbol{\eta} = (\eta_1, \dots, \eta_r)\in\mathbb{R}^r$ satisfies $\sum_{i=1}^r \eta_i = 0$. Then, as $\|\boldsymbol{\eta}\|_2\to0$ at fixed $r$ and fixed local Hilbert-space dimensions,
\begin{equation} \label{eq:BNMR_expansion}
    M_{\mathrm{bp}}(\boldsymbol{\lambda})
    = r \|\boldsymbol{\eta}\|_2^2 + O(\|\boldsymbol{\eta}\|_2^3).
\end{equation}
\end{proposition}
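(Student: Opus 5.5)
The plan is to prove matching upper and lower bounds on $f_{\mathrm{bp}}(\boldsymbol{\lambda})$. Both bounds live on the minimal core of Eq.~\eqref{eq:final_chi}, with $k=n$ qubits per side. By Proposition~\ref{prop:universal_BNMR} we may replace $|\Psi\rangle_{AB}$ by $|\phi_{\boldsymbol\lambda}\rangle_{\rm code}$ of Eq.~\eqref{eq:spectral_core}, a $2n$-qubit state with $2^{2n}=r^2$.

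\emph{Upper bound.} We evaluate the SRE of the core itself, i.e.\ take $U=\mathbb{1}$ in Eq.~\eqref{eq:NLMR}.
\begin{enumerate}
\item Write Paulis on each side as $X^{\mathbf a}Z^{\mathbf b}$ with $\mathbf a,\mathbf b\in\mathbb{F}_2^n$. Then $\langle\phi_{\boldsymbol\lambda}|X^{\mathbf a}Z^{\mathbf b}\otimes X^{\mathbf a'}Z^{\mathbf b'}|\phi_{\boldsymbol\lambda}\rangle$ vanishes unless $\mathbf a=\mathbf a'$.
\item When $\mathbf a=\mathbf a'$ it equals, up to a global sign, $\sum_{\mu}\sqrt{\lambda_\mu\lambda_{\mu\oplus\mathbf a}}\,(-1)^{(\mathbf b+\mathbf b')\cdot\mu}$. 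This is a Walsh--Hadamard transform of the "autocorrelation" vector $g_{\mathbf a}(\mu)=\sqrt{\lambda_\mu\lambda_{\mu\oplus\mathbf a}}$.
\item Summing the fourth powers over $\mathbf b,\mathbf b'$ gives $r^{-2}\sum_P\mathrm{Tr}[\rho P]^4$ as an explicit polynomial in the $g_{\mathbf a}$.
\item Expand $\sqrt{\lambda_\mu\lambda_\nu}=\tfrac1r+\tfrac12(\eta_\mu+\eta_\nu)-\tfrac r8(\eta_\mu-\eta_\nu)^2+O(\eta^3)$. Use $\sum_\mu\eta_\mu=0$ to kill the linear terms, and collect the quadratic ones. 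This should reproduce Cao \emph{et al.}'s bound: $M_2(\phi_{\boldsymbol\lambda})=r\|\boldsymbol\eta\|_2^2+O(\|\boldsymbol\eta\|_2^3)$.
\end{enumerate}
This computation is routine bookkeeping.

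\emph{Lower bound.} This is the main obstacle: we must show that no local unitary $U_A\otimes U_B$ does better at quadratic order.
\begin{enumerate}
\item \emph{Localize the minimizers.} Let $U^\ast(\boldsymbol\eta)$ be a minimizer, which exists by compactness. By continuity of $M_2$, every limit point of $U^\ast(\boldsymbol\eta)$ as $\boldsymbol\eta\to0$ lies in the zero set $\mathcal{Z}=\{U:\ M_2(U|\phi_{\boldsymbol u_r}\rangle)=0\}$. By Lemma~\ref{lemma:zero_BNMR} and the Bell-pair structure of stabilizer states, $\mathcal{Z}$ consists of local unitaries mapping the maximally entangled core to a stabilizer state.
\item \emph{Reduce to a neighbourhood of the flat core.} Such a $U$ is a local Clifford up to an element of the stabilizer group of $|\phi_{\boldsymbol u_r}\rangle$ of the form $W\otimes \bar W$. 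Clifford invariance of $M_2$ absorbs the Clifford part, and $W\otimes\bar W$ leaves the flat core fixed. Hence we may write the state as $e^{iH_A}\otimes e^{iH_B}|\phi_{\boldsymbol\lambda}\rangle$ with $H_A,H_B$ small, quotienting out the directions that fix $|\phi_{\boldsymbol u_r}\rangle$.
\item \emph{Expand around the stabilizer point.} Near a stabilizer state $|s\rangle$, write $|\psi\rangle=|s\rangle+|\delta\rangle$. Then $M_2(\psi)\approx 1-r^{-2}\sum_P\mathrm{Tr}[\rho P]^4$ is, at second order, a positive semidefinite quadratic form $Q(\delta)$. It vanishes exactly on the directions tangent to the stabilizer manifold through $|s\rangle$, which are generated by Paulis in the normalizer.
\item \emph{Decouple the two perturbations.} We take $\delta=\delta_{\eta}+\delta_H$. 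Here $\delta_\eta$ is the spectral deformation, diagonal in the $|\mu\mu\rangle$ basis with traceless amplitude $r\eta_\mu/2$ (up to normalization), and $\delta_H=i(H_A\otimes\mathbb1+\mathbb1\otimes H_B)|\phi_{\boldsymbol u_r}\rangle$. The key claim is that $Q(\delta_\eta+\delta_H)\ge Q(\delta_\eta)$, i.e.\ the mixed term cannot lower the form. I would prove this by showing that $\delta_\eta$ is $Q$-orthogonal to every local tangent vector $\delta_H$.
\item \emph{Verify the decoupling.} In the Pauli expansion, $\delta_\eta$ only shifts amplitudes on Paulis of the form $Z^{\mathbf b}\otimes Z^{\mathbf b'}$ together with their $X$-sector partners. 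By contrast, the first-order variation of $\mathrm{Tr}[\rho P]$ induced by local Hermitian generators acting on a maximally entangled state is a commutator term. Its overlap with $\delta_\eta$ is $\mathrm{Re}\langle\delta_\eta|H_A\otimes\mathbb1|\phi_{\boldsymbol u_r}\rangle\propto\mathrm{Tr}[\mathrm{diag}(\boldsymbol\eta)H_A]$ in the appropriate gauge. I would exploit the remaining gauge freedom (the diagonal torus fixing the Schmidt basis) together with the $\mathrm{Re}/\mathrm{Im}$ split of first-order variations to show that these overlaps drop out of $Q$.
\end{enumerate}
If this direct orthogonality computation becomes unwieldy, the fallback is an indirect lower bound. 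The local Schmidt spectrum is LU-invariant, and $M_2$ of any state with reduced spectrum $\boldsymbol\lambda$ can be bounded below via the stabilizer-entropy/flatness relation of Ref.~\cite{flatness} restricted to the Pauli subgroup acting on $A$. That relation gives $\sum_{P_A}\mathrm{Tr}[\rho_A P_A]^4$ in terms of $\mathrm{Tr}\rho_A^2$ and $\mathrm{Tr}\rho_A^4$, and hence a quadratic lower bound in $\|\boldsymbol\eta\|_2$ whose constant has to be checked to match $r$.

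Combining the two bounds, with cubic remainders uniform at fixed $r$ and fixed local dimensions, yields Eq.~\eqref{eq:BNMR_expansion}.
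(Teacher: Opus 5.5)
Your upper bound is fine, though it is heavier than needed. The paper just notes $|\langle\phi_r|\psi_{\boldsymbol\lambda}\rangle|^2=1-\frac r4\|\boldsymbol\eta\|_2^2+O(\|\boldsymbol\eta\|_2^3)$ and applies Lemma~\ref{lemma:local_expansion}.

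The lower bound has a genuine gap, and it starts in your step~3. For fixed qubit number the stabilizer set is finite, so there is no stabilizer manifold with nontrivial tangent directions; Paulis commuting with a maximal stabilizer group only rephase $|s\rangle$. The computation behind Lemma~\ref{lemma:local_expansion} gives $M_2(\psi)=4t+O(t^2)$ with $t=1-|\langle s|\psi\rangle|^2$. So your form is $Q(\delta)=4\|\delta_\perp\|^2$, isotropic in every direction transverse to $|s\rangle$. Missing this also breaks step~2. $W\otimes\bar W$ fixes $|\phi_{\boldsymbol u_r}\rangle$ but neither fixes $|\phi_{\boldsymbol\lambda}\rangle$ nor is Clifford, so it cannot be dropped from the argument of $M_2$. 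Only the overlap with $|\phi_{\boldsymbol u_r}\rangle$ is invariant, and that suffices precisely because of the isotropy.

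Step~5, the crux, is left as an intention. The quantity you write, $\mathrm{Re}\langle\delta_\eta|H_A\otimes I|\phi_{\boldsymbol u_r}\rangle\propto\sum_\mu\eta_\mu(H_A)_{\mu\mu}$, is real and generically nonzero. The torus $e^{iD}\otimes e^{-iD}$ only shifts diagonal weight between $H_A$ and $H_B$, leaving $\sum_\mu\eta_\mu[(H_A)_{\mu\mu}+(H_B)_{\mu\mu}]$ unchanged, so no gauge choice removes it. What actually kills the cross term is the factor $i$ in $\delta_H$. In coefficient-matrix language $\delta_\eta\propto\mathrm{diag}(\boldsymbol\eta)$ is Hermitian while $\delta_H\propto i(H_A+H_B^{T})$ is anti-Hermitian, and these are orthogonal under $\mathrm{Re}\,\mathrm{Tr}(A^\dagger B)$. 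Even granting this, first-order orthogonality is not yet a bound. You still need the cubic remainder in $(\boldsymbol\eta,H)$ to be absorbed by $\|P_\perp\delta_H\|^2\geq c\|H\|^2$ on a slice transverse to the stabilizer subgroup of $|\phi_{\boldsymbol u_r}\rangle$.

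Your fallback fails outright, for three reasons:
\begin{itemize}
\item the flatness relation of Ref.~\cite{flatness} is a Clifford-orbit average, not a pointwise identity;
\item $\sum_{P_A}\mathrm{Tr}[\rho_AP_A]^4$ is not a function of the spectrum of $\rho_A$;
\item discarding Pauli strings lower-bounds $\sum_P\mathrm{Tr}[\rho P]^4$ and hence \emph{upper}-bounds $M_2$, the wrong direction.
\end{itemize}

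The missing idea is that you never need to linearize in the local unitaries. Lemma~\ref{lemma:local_expansion} gives $M_2\geq4\delta_{\rm stab}-C_1\delta_{\rm stab}^2$ on a uniform neighborhood of $\mathrm{Stab}$, and the upper bound plus compactness forces the optimal LU image into that neighborhood. It then suffices to lower-bound $\delta_{\rm stab}$ for \emph{every} LU image of the state, and the von Neumann trace inequality does this in one line. For any stabilizer $|\tau\rangle$ of Schmidt rank $q$ and any $U_A\otimes U_B$, $|\langle\tau|U_A\otimes U_B|\Psi\rangle|^2\leq F_q(\boldsymbol\lambda)$. Near the flat spectrum, the power-of-two gap gives $F_q\leq F_r=1-\frac r4\|\boldsymbol\eta\|_2^2+O(\|\boldsymbol\eta\|_2^3)$ (Lemma~\ref{lemma:spectral_lower_bound}).

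This bound is LU-invariant and covers all stabilizer states at once, including other Schmidt ranks at the fixed local dimensions of the statement. It therefore needs none of the following:
\begin{itemize}
\item the reduction to the minimal core via Proposition~\ref{prop:universal_BNMR};
\item localization of the minimizer in the unitary group or gauge fixing;
\item control of the remainder in $H$.
\end{itemize}
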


To complete the proof of Proposition~\ref{prop:perturbative_BNMR}, we utilize the following two lemmas, whose proofs are provided in the Appendix.
\begin{lemma}\label{lemma:local_expansion}
Let $|s_0\rangle \in \mathrm{Stab}$ be a fixed stabilizer state. For any normalized pure state $|\psi\rangle$ sufficiently close to $|s_0\rangle$,
\begin{equation} \label{eq:local expansion}
M_2(|\psi\rangle) = 4\left(1 - |\langle s_0|\psi\rangle|^2\right) + O\bigl( (1 - |\langle s_0|\psi\rangle|^2)^2 \bigr).
\end{equation}
Define $\delta_{\mathrm{stab}}(|\psi\rangle) = 1 - \max_{|\tau\rangle \in \mathrm{Stab}} |\langle \tau|\psi\rangle|^2$ as the fidelity distance from a state $|\psi\rangle$ to the stabilizer set.
For a fixed system size, the stabilizer set $\mathrm{Stab}$ is finite. Consequently, there exist a neighborhood of $\mathrm{Stab}$ and a constant $C_1 > 0$ such that every pure state in this neighborhood satisfies
\begin{equation} \label{eq:global expansion}
M_2(|\psi\rangle) \geq 4\delta_{\mathrm{stab}}(|\psi\rangle) - C_1 \delta_{\mathrm{stab}}(|\psi\rangle)^2.
\end{equation}
\end{lemma}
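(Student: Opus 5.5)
By Clifford invariance of $M_2$ I will reduce to $|s_0\rangle=|0\rangle^{\otimes N}$ and then expand the Pauli spectrum of $\rho=|\psi\rangle\langle\psi|$ to first order in $\varepsilon:=1-|\langle s_0|\psi\rangle|^2$. Stabilizer states form a single Clifford orbit, so there is a Clifford $U_{\rm C}$ with $U_{\rm C}|s_0\rangle=|0\rangle^{\otimes N}$. Both $M_2$ and the fidelity are unchanged under $U_{\rm C}$, so I may assume $|s_0\rangle=|0\rangle^{\otimes N}$. After fixing a global phase, write $|\psi\rangle=\sqrt{1-\varepsilon}\,|s_0\rangle+\sqrt{\varepsilon}\,|\chi\rangle$ with $\langle s_0|\chi\rangle=0$. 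Then
\begin{equation*}
\rho=(1-\varepsilon)|s_0\rangle\langle s_0|+\sqrt{\varepsilon(1-\varepsilon)}\,\bigl(|s_0\rangle\langle\chi|+|\chi\rangle\langle s_0|\bigr)+\varepsilon|\chi\rangle\langle\chi| .
\end{equation*}

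I will split $\mathcal{P}_N$ into two classes. The first class consists of the $2^N$ Pauli strings $P$ with $P|s_0\rangle=s_P|s_0\rangle$ for a sign $s_P=\pm1$; these are the elements of the stabilizer group up to sign. The second class is all remaining strings, for which $\langle s_0|P|s_0\rangle=0$.

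For $P$ in the first class, the cross terms vanish because $\langle s_0|P|\chi\rangle=s_P\langle s_0|\chi\rangle=0$. Hence $\mathrm{Tr}[\rho P]=s_P(1-\varepsilon)+\varepsilon\langle\chi|P|\chi\rangle$, and its fourth power is $1-4\varepsilon+4\varepsilon s_P\langle\chi|P|\chi\rangle+O(\varepsilon^2)$. The key identity is $\sum_{P} s_P P=2^N|s_0\rangle\langle s_0|$, the projector onto the stabilizer state written as the signed sum of its stabilizer group. It gives $\sum_P s_P\langle\chi|P|\chi\rangle=2^N|\langle s_0|\chi\rangle|^2=0$. So the first class contributes $2^N(1-4\varepsilon)+O(\varepsilon^2)$.

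For $P$ in the second class, $\mathrm{Tr}[\rho P]=2\sqrt{\varepsilon(1-\varepsilon)}\,\mathrm{Re}\langle s_0|P|\chi\rangle+\varepsilon\langle\chi|P|\chi\rangle=O(\sqrt{\varepsilon})$. Its fourth power is therefore $O(\varepsilon^2)$.

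Dividing the total by $2^N$ gives $\sum_P 2^{-N}\mathrm{Tr}[\rho P]^4=1-4\varepsilon+O(\varepsilon^2)$. Taking $-\ln$ yields Eq.~\eqref{eq:local expansion}.

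For Eq.~\eqref{eq:global expansion}, the main point is that the $O(\varepsilon^2)$ remainder is uniform. I will make it explicit. Every matrix element $|\langle a|P|b\rangle|$ is at most $1$ for normalized vectors, and there are only $4^N$ Pauli strings. Each $\mathrm{Tr}[\rho P]^4$ is a polynomial in $\sqrt{\varepsilon}$ and $\sqrt{1-\varepsilon}$ with bounded coefficients. Hence the remainder is bounded by $C(N)\varepsilon^2$ for all $\varepsilon\le\varepsilon_0$. Here $C(N)$ and $\varepsilon_0$ depend only on $N$, not on $|s_0\rangle$ or $|\chi\rangle$; the $\ln$ expansion is also controlled uniformly on $[0,\varepsilon_0]$. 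This uniformity is the step needing care.

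Given any $|\psi\rangle$ with $\delta_{\mathrm{stab}}(|\psi\rangle)\le\varepsilon_0$, I choose $|\tau\rangle\in\mathrm{Stab}$ attaining the maximum, which exists because $\mathrm{Stab}$ is finite. Applying the uniform local bound with $s_0=\tau$ and $\varepsilon=\delta_{\mathrm{stab}}$ gives $M_2\ge 4\delta_{\mathrm{stab}}-C_1\delta_{\mathrm{stab}}^2$. The neighborhood is $\{\delta_{\mathrm{stab}}\le\varepsilon_0\}$, and $C_1$ is taken as the resulting uniform constant; finiteness of $\mathrm{Stab}$ (or equivalently the Clifford reduction) guarantees it is a single constant.
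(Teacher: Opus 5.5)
Your proposal is correct and follows essentially the same route as the paper: the same orthogonal decomposition $|\psi\rangle=\sqrt{1-\varepsilon}\,|s_0\rangle+\sqrt{\varepsilon}\,|\chi\rangle$, the same split of $\mathcal{P}_N$ into the signed stabilizer group and its complement, the same projector identity $\sum_P s_P P = 2^N|s_0\rangle\langle s_0|$ to cancel the linear term, and the same choice of a closest stabilizer state for the second statement. Your Clifford reduction to $|0\rangle^{\otimes N}$ is unnecessary, since you never use it. Your explicit check that the remainder is bounded by $C(N)\varepsilon^2$ uniformly in both $|s_0\rangle$ and the direction $|\chi\rangle$ is a small but genuine tightening of the paper's bare appeal to the finiteness of $\mathrm{Stab}$.
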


\begin{lemma} \label{lemma:spectral_lower_bound}
Let $|\Psi\rangle_{AB}$ be a pure bipartite state with Schmidt rank $r=2^n$, whose Schmidt spectrum $\boldsymbol{\lambda}$ is perturbatively close to the flat spectrum $\boldsymbol{u}_r$, such that $\boldsymbol{\lambda} = \boldsymbol{u}_r + \boldsymbol{\eta}$ with $\sum_{i=1}^r \eta_i = 0$ and $\|\boldsymbol{\eta}\|_2 \ll 1$. Then, its fidelity distance to the stabilizer set, $\delta_{\mathrm{stab}}(|\Psi\rangle_{AB})$, satisfies the lower bound:
\begin{equation} \label{eq:delta_lower_bound}
    \delta_{\mathrm{stab}}(|\Psi\rangle_{AB}) \geq \frac{r}{4} \|\boldsymbol{\eta}\|_2^2 - C_2 \|\boldsymbol{\eta}\|_2^3,
\end{equation}
for some constant $C_2 > 0$.
\end{lemma}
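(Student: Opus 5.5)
The plan is to bound the maximal stabilizer fidelity $\max_{|\tau\rangle\in\mathrm{Stab}}|\langle\tau|\Psi\rangle|^2$ from above by a quantity that depends only on Schmidt spectra, and then to expand that quantity around $\boldsymbol{u}_r$. The structural input is the fact quoted after Lemma~\ref{lemma:zero_BNMR}: every bipartite stabilizer state is locally equivalent to $q$ Bell pairs (times local stabilizer ancillas). Hence, across the $A|B$ cut, every $|\tau\rangle\in\mathrm{Stab}$ has a flat Schmidt spectrum $\boldsymbol{\mu}^{(q)}=(2^{-q},\dots,2^{-q})$ of rank $2^q$, for some $q\in\{0,1,\dots,\min(n_A,n_B)\}$.

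\textbf{Step 1: a spectral overlap bound.} Write $|\Psi\rangle=\sum_{ab}\Psi_{ab}|a\rangle|b\rangle$ and $|\tau\rangle=\sum_{ab}T_{ab}|a\rangle|b\rangle$. Then $\langle\tau|\Psi\rangle=\mathrm{Tr}(T^\dagger\Psi)$. The singular values of $\Psi$ are $\sqrt{\lambda_i}$ and those of $T$ are $\sqrt{\mu_i}$, so the von Neumann trace inequality gives
\begin{equation*}
|\langle\tau|\Psi\rangle|\le\sum_i\sqrt{\lambda_i^{\downarrow}\mu_i^{\downarrow}}
=2^{-q/2}\sum_{i=1}^{\min(r,2^q)}\sqrt{\lambda_i^{\downarrow}} .
\end{equation*}
Therefore
\begin{equation*}
1-\delta_{\mathrm{stab}}(|\Psi\rangle)\le\max_q F_q(\boldsymbol{\lambda}),
\qquad
F_q(\boldsymbol{\lambda})=2^{-q}\Big(\sum_{i\le\min(r,2^q)}\sqrt{\lambda_i^{\downarrow}}\Big)^2 .
\end{equation*}

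\textbf{Step 2: the mismatched ranks are harmless.} At $\boldsymbol{\eta}=0$ the two cases give:
\begin{itemize}
\item for $q<n$, $F_q(\boldsymbol{u}_r)=2^q/r\le 1/2$;
\item for $q>n$, $F_q(\boldsymbol{u}_r)=r/2^q\le 1/2$.
\end{itemize}
The functions $F_q$ are continuous and there are only finitely many $q$ at fixed local dimensions. So for $\|\boldsymbol{\eta}\|_2$ sufficiently small, all $F_q$ with $q\neq n$ stay below, say, $3/4$. Meanwhile $F_n$ will be shown to be $1-O(\|\boldsymbol{\eta}\|_2^2)$. Hence the maximum is attained at $q=n$ in the perturbative regime.

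\textbf{Step 3: expansion of $F_n$.} Write $\sqrt{1/r+\eta_i}=r^{-1/2}\sqrt{1+r\eta_i}$ and expand:
\begin{equation*}
\sqrt{1/r+\eta_i}=r^{-1/2}\Big(1+\tfrac{r}{2}\eta_i-\tfrac{r^2}{8}\eta_i^2+O(\eta_i^3)\Big).
\end{equation*}
Summing over $i$ and using $\sum_i\eta_i=0$ gives
\begin{equation*}
\sum_i\sqrt{\lambda_i}=r^{1/2}\Big(1-\tfrac{r}{8}\|\boldsymbol{\eta}\|_2^2+O(\|\boldsymbol{\eta}\|_3^3)\Big).
\end{equation*}
Squaring and dividing by $r$, we obtain
\begin{equation*}
F_n=1-\tfrac{r}{4}\|\boldsymbol{\eta}\|_2^2+O(\|\boldsymbol{\eta}\|_2^3),
\end{equation*}
where we used $\|\boldsymbol{\eta}\|_3\le\|\boldsymbol{\eta}\|_2$. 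The remainder is controlled uniformly by Taylor's theorem with remainder on a fixed ball around $\boldsymbol{u}_r$. Combining with Step 2 yields $\delta_{\mathrm{stab}}\ge\frac{r}{4}\|\boldsymbol{\eta}\|_2^2-C_2\|\boldsymbol{\eta}\|_2^3$.

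\textbf{Main obstacle.} I expect the main difficulty to be justifying Step 1 rigorously. This means invoking the precise statement that every stabilizer state has a flat, power-of-two Schmidt spectrum across any cut, and correctly applying the von Neumann inequality when the two Schmidt ranks differ (padding with zeros). The Taylor expansion itself is routine.
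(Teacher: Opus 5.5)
Your proposal is correct and follows the paper's proof essentially step for step. Both arguments use the von Neumann trace inequality to bound every stabilizer overlap by $F_q(\boldsymbol{\lambda})$, use the factor-of-two gap $F_q(\boldsymbol{u}_r)\le 1/2$ for mismatched power-of-two ranks plus continuity to isolate the same-rank sector, and then expand $F_r(\boldsymbol{\lambda})=1-\frac{r}{4}\|\boldsymbol{\eta}\|_2^2+O(\|\boldsymbol{\eta}\|_2^3)$. The only difference is cosmetic: you label stabilizer Schmidt ranks by their exponent rather than by the rank itself.
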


\noindent\textbf{Proof of Proposition \ref{prop:perturbative_BNMR}.}
In the following, we establish the perturbative scaling by deriving upper and lower bounds of $M_{\mathrm{bp}}(\boldsymbol{\lambda})$ and showing they coincide at leading quadratic order.

\textbf{Upper bound.}---To obtain an upper bound on the BNMR for a pure state with a perturbed Schmidt spectrum $\boldsymbol{\lambda}$, we consider the canonical Schmidt representative
\begin{equation}\label{eq:up}
    |\psi_{\boldsymbol{\lambda}}\rangle = \sum_{i=1}^r \sqrt{\lambda_i} |i\rangle_A |i\rangle_B,
\end{equation}
where $\{|i\rangle_A\}$ and $\{|i\rangle_B\}$ denote the computational bases throughout. Its reduced states of both subsystems $A$ and $B$ reside within the stabilizer polytope, carrying vanishing local magic resource. Therefore, the SRE of this representative can serve as a natural variational upper bound on BNMR, namely $M_{\mathrm{bp}}(\boldsymbol{\lambda}) \leq M_2(|\psi_{\boldsymbol{\lambda}}\rangle)$.

For the flat spectrum $\boldsymbol{u}_r$, the corresponding reference state is
\begin{equation}\label{eq:reference}
    |\phi_r\rangle = \frac{1}{\sqrt{r}} \sum_{i=1}^r |i\rangle_A |i\rangle_B.
\end{equation}
Because the Schmidt rank is $r=2^n$, $|\phi_r\rangle$ is local-Clifford equivalent to $n$ independent Bell pairs ~\cite{stabilizer}, establishing it as a stabilizer state. The fidelity between the perturbed state and this reference stabilizer state is $F = |\langle\phi_r|\psi_{\boldsymbol{\lambda}}\rangle|^2 = \frac{1}{r}(\sum_{i=1}^r \sqrt{\lambda_i})^2$. Substituting the perturbation $\lambda_i = 1/r + \eta_i$, we expand the coefficients to second order:
\begin{equation}
    \sqrt{\lambda_i} = \frac{1}{\sqrt{r}} + \frac{\sqrt{r}}{2}\eta_i - \frac{r^{3/2}}{8}\eta_i^2 + O(\eta_i^3).
\end{equation}
Summing over $i$ and using the zero-sum constraint $\sum_{i=1}^r \eta_i = 0$, the first-order linear term vanishes. This yields the expanded fidelity:
\begin{equation}\label{eq:fidelity_expansion}
    F = 1 - \frac{r}{4}\|\boldsymbol{\eta}\|_2^2 + O(\|\boldsymbol{\eta}\|_2^3).
\end{equation}
Applying Lemma \ref{lemma:local_expansion} around the stabilizer state $|\phi_r\rangle$, we have $M_2(|\psi_{\boldsymbol{\lambda}}\rangle) = 4(1-F) + O((1-F)^2)$. Substitution of Eq.~\eqref{eq:fidelity_expansion} then leads to the upper bound:
\begin{equation} \label{eq:upper_bound}
    M_{\mathrm{bp}}(\boldsymbol{\lambda}) \leq M_2(|\psi_{\boldsymbol{\lambda}}\rangle) = r\|\boldsymbol{\eta}\|_2^2 + O(\|\boldsymbol{\eta}\|_2^3).
\end{equation}

\textbf{Lower bound.}---Since the local unitary group is compact and $M_2$ is continuous, there exist optimal local unitary operations $U_A^*$ and $U_B^*$ such that the state $|\chi^*\rangle = (U_A^* \otimes U_B^*)|\psi_{\boldsymbol{\lambda}}\rangle$ achieves the minimum, i.e., $M_{\mathrm{bp}}(\boldsymbol{\lambda}) = M_2(|\chi^*\rangle)$.

The upper bound in Eq.~\eqref{eq:upper_bound}, together with the nonnegativity of $M_2$, implies
$M_2(|\chi^*\rangle)=O(\|\boldsymbol{\eta}\|_2^2)$, and hence $M_2(|\chi^*\rangle)\to0$ as
$\|\boldsymbol{\eta}\|_2\to0$. Since $M_2$ is continuous and its zero set is precisely the stabilizer set, compactness further implies $\delta_{\mathrm{stab}}(|\chi^*\rangle)\to0$. Thus,
$|\chi^*\rangle$ lies within the uniform stabilizer neighborhood of Lemma~\ref{lemma:local_expansion} for all sufficiently small $\|\boldsymbol{\eta}\|_2$, and therefore
\begin{equation}
\label{eq:proof_lower1}
    M_{\mathrm{bp}}(\boldsymbol{\lambda})
    =
    M_2(|\chi^*\rangle)
    \geq
    4\delta_{\mathrm{stab}}(|\chi^*\rangle)
    -
    C_1\delta_{\mathrm{stab}}(|\chi^*\rangle)^2.
\end{equation}
The right side of Eq.~\eqref{eq:proof_lower1} has the form $f(\delta)=4\delta-C_1\delta^2$. Since
$f'(\delta)=4-2C_1\delta>0$ in a sufficiently small neighborhood of $\delta=0$, $f$ is monotonically increasing throughout the relevant perturbative regime. Applying Lemma~\ref{lemma:spectral_lower_bound} to $|\chi^*\rangle$ gives
\begin{equation}
\label{eq:proof_lower2}
    \delta_{\mathrm{stab}}(|\chi^*\rangle)
    \geq
    \frac{r}{4}\|\boldsymbol{\eta}\|_2^2
    -
    C_2\|\boldsymbol{\eta}\|_2^3.
\end{equation}
For sufficiently small $\|\boldsymbol{\eta}\|_2$, both sides of Eq.~\eqref{eq:proof_lower2} lie within the interval where $f$ is monotonically increasing. Combining Eqs.~\eqref{eq:proof_lower1} and
\eqref{eq:proof_lower2}, we obtain
\begin{align}
    M_{\mathrm{bp}}(\boldsymbol{\lambda})
    &\geq
    4\left(
    \frac{r}{4}\|\boldsymbol{\eta}\|_2^2
    -
    C_2\|\boldsymbol{\eta}\|_2^3
    \right)
    -
    C_1\left(
    \frac{r}{4}\|\boldsymbol{\eta}\|_2^2
    -
    C_2\|\boldsymbol{\eta}\|_2^3
    \right)^2
    \nonumber\\
    &=
    r\|\boldsymbol{\eta}\|_2^2
    -
    4C_2\|\boldsymbol{\eta}\|_2^3
    -
    O(\|\boldsymbol{\eta}\|_2^4)
    \nonumber\\
    &=
    r\|\boldsymbol{\eta}\|_2^2
    -
    O(\|\boldsymbol{\eta}\|_2^3).
\label{eq:lower_bound}
\end{align}

In consequence, the upper and lower bounds in Eqs.~\eqref{eq:upper_bound} and \eqref{eq:lower_bound} agree with each other to second order, fixing both the quadratic scaling and its leading coefficient:
\begin{equation}
    M_{\mathrm{bp}}(\boldsymbol{\lambda})
    =
    r\|\boldsymbol{\eta}\|_2^2
    +
    O(\|\boldsymbol{\eta}\|_2^3).
\end{equation}
Equivalently,
\begin{equation}
M_{\mathrm{bp}}(\boldsymbol{\lambda}) \to r\|\boldsymbol{\eta}\|_2^2 \quad \text{as } \|\boldsymbol{\eta}\|_2 \to 0.
\end{equation}
This completes the proof of Proposition~\ref{prop:perturbative_BNMR}.

\section{Application to typical states}
\subsection{Perturbative response of Haar-random states}
We now combine Proposition~\ref{prop:perturbative_BNMR} with the statistical properties of Haar-random pure states $|\Psi\rangle$ on a bipartite Hilbert space $\mathcal{H}_A \otimes \mathcal{H}_B$, with dimensions $r = 2^{n_A}$ and $R = 2^{n_B}$ ($r \leq R$). For a full-rank reduced density matrix $\rho_A = \mathrm{Tr}_B |\Psi\rangle\langle\Psi|$, its Schmidt spectrum $\boldsymbol{\lambda}$ can be parameterized as $\boldsymbol{\lambda} = \boldsymbol{u}_r + \boldsymbol{\eta}$. To simplify notation, we denote the squared 2-norm of the perturbation, introduced in Proposition~\ref{prop:perturbative_BNMR}, by $D \equiv \|\boldsymbol{\eta}\|_2^2$. This squared spectral distance $D$ is related to the purity of the reduced state:
\begin{equation}
D = \sum_{i=1}^r \left(\lambda_i - \frac{1}{r}\right)^2 = \mathrm{Tr}(\rho_A^2) - \frac{1}{r}.
\end{equation}

For a Haar-random bipartite pure state, the average purity is $\mathrm{E}_{\mathrm{Haar}}[\mathrm{Tr}(\rho_A^2)] = \frac{r+R}{rR+1}$. Hence, the average spectral distance evaluates to
\begin{equation}
\mathrm{E}_{\mathrm{Haar}}[D] = \frac{r+R}{rR+1} - \frac{1}{r} = \frac{r^2-1}{r(rR+1)}.
\end{equation}
Substituting this result into the leading-order expansion in Eq.~(\ref{eq:BNMR_expansion}) from Proposition~\ref{prop:perturbative_BNMR}, we obtain the Haar-averaged quadratic response of the BNMR induced by spectral fluctuations, which we denote by $M_{\mathrm{bp}}^{(2)}$:
\begin{equation}\label{eq:gamma}
\mathrm{E}_{\mathrm{Haar}}[M^{(2)}_{\mathrm{bp}}] = r \, \mathrm{E}_{\mathrm{Haar}}[D] = \frac{r^2-1}{rR+1}= \gamma \frac{r^2-1}{r^2+\gamma},
\end{equation}
where $\gamma=r/R\leq1$ is the aspect ratio. At fixed $\gamma$, the expected quadratic response approaches to $\gamma = 2^{n_A - n_B}$ in the large-rank limit, indicating that for fixed $n_A$ the BNMR is exponentially suppressed by the increasing environment size $n_B$, as illustrated in Fig.~\ref{fig:HaarBNMR}. To assess the analytical quadratic response, we also compare it with numerical simulations. Specifically, we generate 500 independent Haar-random states and evaluate the SRE of their canonical Schmidt representatives in Eq.~\eqref{eq:up}, denoted by $M_{\mathrm{bp}}^{\mathrm{up}}$, which provides an upper bound on the BNMR. We refer to this quantity as the canonical upper bound.

\begin{figure}[htbp]
\centering
\includegraphics[scale=0.32]{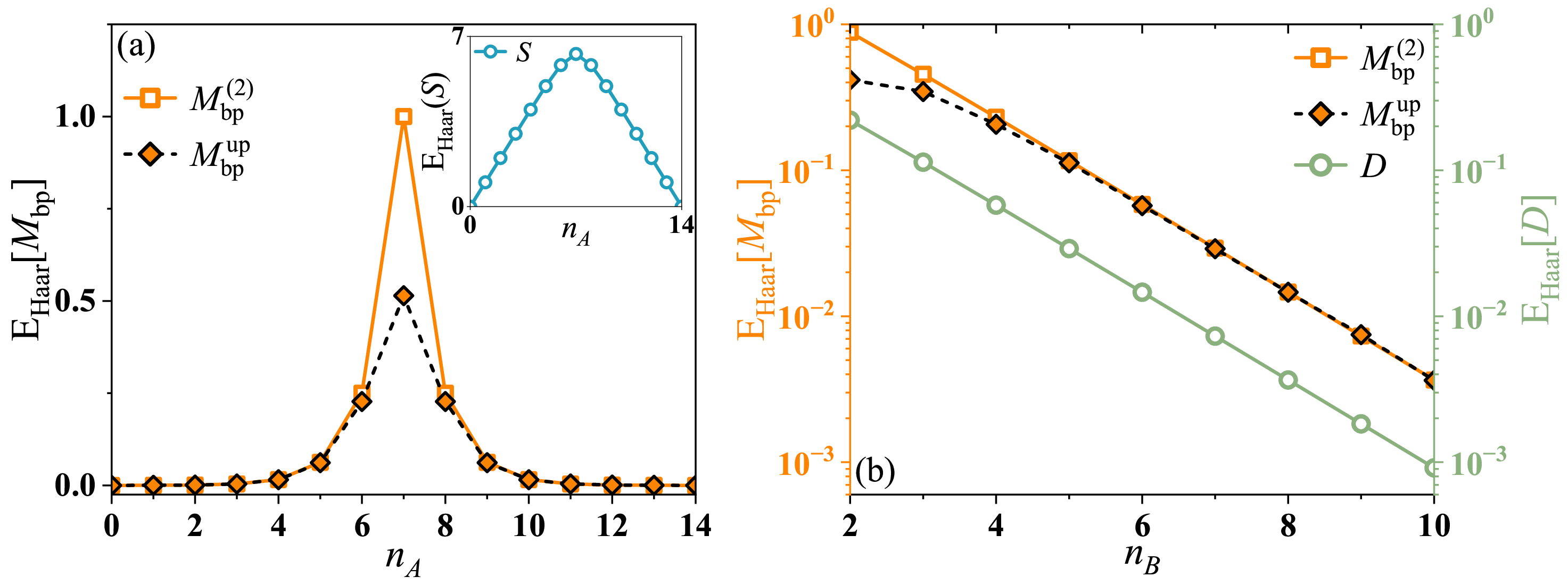}
\caption{BNMR response in the Haar-random states. Orange squares denote the analytical quadratic response $M_{\mathrm{bp}}^{(2)}$, while black diamonds denote the canonical upper bound $M_{\mathrm{bp}}^{\mathrm{up}}$ averaged over 500 independent Haar-random states. (a) Distribution across the bipartitions of a 14-qubit system. Both $M_{\mathrm{bp}}^{(2)}$ and $M_{\mathrm{bp}}^{\mathrm{up}}$ are sharply concentrated at the symmetric cut $n_A=n_B=7$, in contrast to the broadening Page curve of the entanglement entropy shown in the inset ~\cite{Page1,Page2,Page3}. The discrepancy between $M_{\mathrm{bp}}^{(2)}$ and $M_{\mathrm{bp}}^{\mathrm{up}}$ at the symmetric cut signals the breakdown of the quadratic approximation. (b) Dependence on the environment size $n_B$ at fixed $n_A=2$. The quadratic response $M_{\mathrm{bp}}^{(2)}$, the canonical upper bound $M_{\mathrm{bp}}^{\mathrm{up}}$, and spectral distance $D$ (green circles) all decay exponentially with $n_B$.}
\label{fig:HaarBNMR}
\end{figure}

To capture the global profile across different bipartitions, we examine the Haar-averaged quadratic response and canonical upper bound for a 14-qubit system, as shown in Fig.~\ref{fig:HaarBNMR}(a). Their close agreement across the perturbative regime allows them to faithfully capture the underlying BNMR distribution, confirming the tightness of the upper bound established in our proof. In sharp contrast to the extensive and broadening Page curve of the entanglement entropy as shown in the inset~\cite{Page1,Page2,Page3}, the BNMR distribution is highly localized at the symmetric bipartition ($n_A=n_B=7$, $\gamma=1$) and rapidly vanishes toward asymmetric cuts. Supported by the concentration of Haar-measure phenomenon~\cite{Mele2024haar,Turkeshi_2025}, we expect these finite-size numerical benchmarks to capture a universal BNMR profile that remains valid for larger Haar-random systems.

While the quadratic response accurately captures the BNMR behavior across asymmetric cuts, it breaks down at the symmetric bipartition. For $r=R$, Eq.~\eqref{eq:gamma} predicts $\mathrm{E}_{\mathrm{Haar}}[M_{\mathrm{bp}}^{(2)}]\to1$. However, for the 14-qubit system shown in Fig.~\ref{fig:HaarBNMR}(a), where $r=R=2^7$, the Haar-averaged canonical upper bound $\mathrm{E}_{\mathrm{Haar}}[M_{\mathrm{bp}}^{\mathrm{up}}]$ at the central cut is approximately $0.514$. This substantial discrepancy originates from the Marchenko--Pastur statistics of the Schmidt spectrum at the central bipartition of Haar-random states ~\cite{MPstatistics}. In this regime, the spectral distance scales as $D=O(r^{-1})$, while the rank factor $r$ compensates this decay, leaving the leading-order contribution $rD$ of order unity. In turn, the fidelity deviation from the flat-spectrum stabilizer reference state is no longer perturbatively small. The state therefore cannot be treated within the local expansion of Lemma~\ref{lemma:local_expansion}, and the quadratic approximation fails at the symmetric cut.

Despite the quantitative breakdown of the quadratic approximation at the symmetric cut, it retains a nontrivial prediction for the scaling toward saturation. For the symmetric bipartition ($r = R$, $\gamma=1$), the quadratic response simplifies to $\mathrm{E}_{\mathrm{Haar}}[M^{(2)}_{\mathrm{bp}}] = (r^2-1)/(r^2+1)$ and hence the residual deviation from its saturation value is
\begin{equation}
\Delta M_{\mathrm{bp}}^{(2)}(r) = 1- \mathrm{E}_{\mathrm{Haar}} [M_{\mathrm{bp}}^{(2)}(r)] = \frac{2}{r^2+1} = O(r^{-2}).
\end{equation}

Remarkably, at fixed system size $N$, this inverse-square law matches the $O(\chi^{-2})$ bond-dimension scaling of magic resource saturation in Haar-random states as recently demonstrated by Lami \textit{et al.} ~\cite{Haar_scaling}. Although the quadratic response does not reproduce the absolute BNMR value at the symmetric cut, it recovers the same algebraic exponent. By directly relating this exponent to Schmidt-spectrum fluctuations across a single bipartition, our analysis suggests that the observed $O(\chi^{-2})$ saturation scaling is not solely an artifact of the MPS ansatz ~\cite{MPS1,MPS2}, but instead reflects a general bipartite mechanism governed by BNMR.

To estimate the total BNMR in an $N$-qubit Haar-random state, we assume that the BNMR contributions captured across different spatial cuts are non-overlapping and therefore additive. We then sum the quadratic responses over all bipartitions. Since the quadratic response closely agrees with canonical upper bound across asymmetric cuts and exceeds it at the symmetric cut, we use this sum as a upper estimate of the total BNMR. For even $N$, the symmetry of the Schmidt rank $r=2^{\min(n_A,N-n_A)}$ gives
\begin{equation}
\begin{aligned}
M_N^{(\mathrm{even})} &= \sum_{n_A=1}^{N-1} \mathrm{E}_{\mathrm{Haar}} [M_{\mathrm{bp}}^{(2)}(n_A)] \\ &= \sum_{n_A=1}^{N-1} \frac{ 4^{\min(n_A,N-n_A)}-1 }{ 2^N+1 } \\ &= \frac{1}{2^N+1} \left[ 2\sum_{n_A=1}^{N/2-1} \left(4^{n_A}-1\right) + \left(4^{N/2}-1\right) \right] \\&=
\frac{1}{2^N+1}
\left[
\frac{5}{3}\left(2^N-1\right)-N
\right].
\end{aligned}
\end{equation}
Hence, $\lim_{N\to\infty}M_N^{(\mathrm{even})}=5/3$. For odd $N$, the sum converges to $\lim_{N\to\infty}M_N^{(\mathrm{odd})}=4/3$. The total magic resource of Haar-random states is known to be extensive ~\cite{ManyBodymagic,Turkeshi_2025}. In contrast, the cut-summed bipartite response remains $O(1)$, showing that total BNMR residing in bipartite entanglement remains bounded in the thermodynamic limit. This raises the open question of whether nonlocal magic resource carried by multipartite entanglement can exhibit extensive $O(N)$ scaling in Haar-random states.

\subsection{Canonical encoding of Schmidt rank-2 states}
We now specialize Eq.~\eqref{eq:final_chi} to pure many-body states with Schmidt rank $2$ across a fixed bipartition. Any such state admits the Schmidt decomposition
\begin{equation}
\label{eq:rank2_schmidt}
|\Psi\rangle_{AB}
=
\cos\theta\,|u_1\rangle_A\otimes|v_1\rangle_B
+
\sin\theta\,|u_2\rangle_A\otimes|v_2\rangle_B,
\end{equation}
where $\theta\in[0,\pi/2]$, so that $\cos\theta$ and $\sin\theta$ are the Schmidt coefficients, and
$\{|u_1\rangle_A,|u_2\rangle_A\}$ and $\{|v_1\rangle_B,|v_2\rangle_B\}$
form orthonormal Schmidt bases for subsystems $A$ and $B$, respectively. We implement the canonical encoding through local unitary operations $V_A$ and $V_B$ acting on the Schmidt bases as
\[
V_A |u_1\rangle_A
=
|0\rangle_{A_{\rm code}} \otimes |\overline{0}\rangle_{A_{\rm anc}},
\qquad
V_A |u_2\rangle_A
=
|1\rangle_{A_{\rm code}} \otimes |\overline{0}\rangle_{A_{\rm anc}},
\]
\[
V_B |v_1\rangle_B
=
|0\rangle_{B_{\rm code}} \otimes |\overline{0}\rangle_{B_{\rm anc}},
\qquad
V_B |v_2\rangle_B
=
|1\rangle_{B_{\rm code}} \otimes |\overline{0}\rangle_{B_{\rm anc}}.
\]
Applying $V_A \otimes V_B$ to Eq.~\eqref{eq:rank2_schmidt}, we obtain
\[
(V_A \otimes V_B)|\Psi\rangle_{AB}
=
\left(
\cos\theta\,|00\rangle
+
\sin\theta\,|11\rangle
\right)
\otimes
|\overline{0}\rangle_{A_{\rm anc}}
\otimes
|\overline{0}\rangle_{B_{\rm anc}}.
\]
Thus, the minimal encoding core reduces to a pure entangled qubit pair. Without loss of generality, local SWAP operations place the two encoded qubits on adjacent sites across the cut, as illustrated in Fig.~\ref{fig:Bell}.
By Eq.~\eqref{eq:final_chi}, the evaluation of the BNMR therefore reduces to
\[
M_{\mathrm{bp}}(|\Psi\rangle_{AB})
=
M_{\mathrm{bp}}
\!\left(
\cos\theta\,|00\rangle+\sin\theta\,|11\rangle
\right).
\]

\begin{figure}[htbp]
\centering
\includegraphics[scale=0.5]{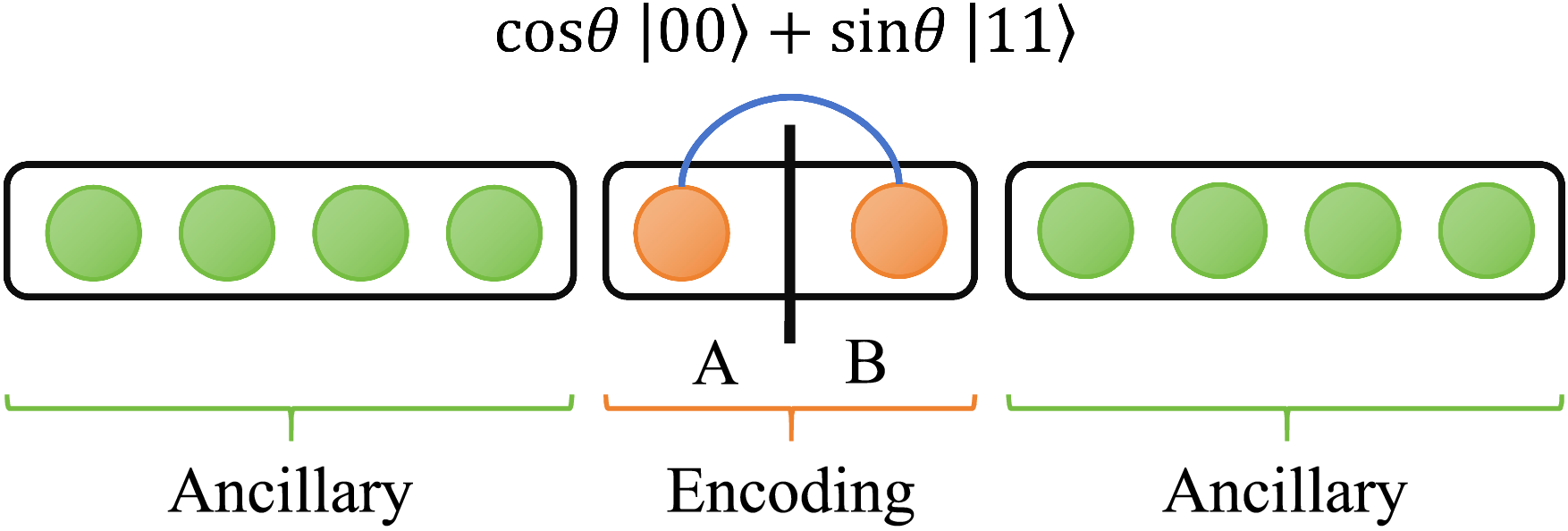}
\caption{Canonical representative of a pure many-qubit state with Schmidt rank $2$ across a fixed bipartition. Under local-unitary transformations, the nontrivial bipartite structure is encoded into a pure entangled qubit pair across the cut, which determines the BNMR.}
\label{fig:Bell}
\end{figure}

Denoting the encoded pair by $|\phi_\theta\rangle_{ab} =\cos\theta\,|00\rangle+\sin\theta\,|11\rangle$, the BNMR of the original many-body state is fully determined by $|\phi_\theta\rangle_{ab}$. Qian and Wang demonstrated that the magic resource of this two-qubit state is entirely nonlocal~\cite{BNM}. Its BNMR  $M_{\mathrm{bp}}(|\phi_\theta\rangle_{ab})$ therefore coincides with its total magic resource $M_2(|\phi_\theta\rangle_{ab})$, yielding the closed-form expression for any many-body pure state with Schmidt rank $2$:
\begin{equation} \label{eq:rank2_final}
M_{\mathrm{bp}}(|\Psi\rangle_{AB})  = M_2(|\phi_\theta\rangle_{ab}) = \ln\!\left(\frac{8}{7+\cos 8\theta}\right).
\end{equation}

It is worth emphasizing that, although the encoded pair is placed on neighboring qubits across the cut in Fig.~\ref{fig:Bell}, this geometric placement does not imply that the BNMR of the original state is restricted to nearest-neighbor correlations. Rather, $M_{\mathrm{bp}}(|\Psi\rangle_{AB})$ is encoded in the correlations between subsystems $A$ and $B$, which allows both short-range and long-range entanglement. In the following, we use Eq.~\eqref{eq:rank2_final} to analyze the distribution and hierarchy of nonlocal magic resource in genuinely multipartite entangled states.

\subsection{Hierarchy collapse in generalized GHZ states}
We now consider the generalized $N$-qubit GHZ state
\begin{equation}
\label{eq:GHZ_state}
|\mathrm{GHZ}_N(\theta)\rangle
=
\cos\theta\,|0\rangle^{\otimes N}
+
\sin\theta\,|1\rangle^{\otimes N},
\qquad
\theta\in(0,\pi/2).
\end{equation}
This state is genuinely multipartite entangled, meaning that it is entangled across every nontrivial bipartition. For an arbitrary bipartition with $n_A+n_B=N$, it can be written as
\begin{equation}
\label{eq:GHZ_bipartition}
|\mathrm{GHZ}_N(\theta)\rangle
=
\cos\theta\,|0\rangle_A^{\otimes n_A}\otimes|0\rangle_B^{\otimes n_B}
+
\sin\theta\,|1\rangle_A^{\otimes n_A}\otimes|1\rangle_B^{\otimes n_B}.
\end{equation}
Since the two states on each subsystem are orthogonal, Eq.~\eqref{eq:GHZ_bipartition} is already the Schmidt decomposition for that bipartition. Therefore, for every cut the Schmidt rank is always $2$, with Schmidt coefficients $\cos\theta$ and $\sin\theta$. It follows from Eq.~\eqref{eq:rank2_final} that the BNMR takes the same value for every bipartition:
\begin{equation}
\label{eq:GHZ_bipartite_magic}
M_{\mathrm{bp}}\!\left(|\mathrm{GHZ}_N(\theta)\rangle\right)
=
\ln\!\left(\frac{8}{7+\cos 8\theta}\right).
\end{equation}

The nested optimization domains constrains the global nonlocal magic resource $M_{\mathrm{NL}}$. By definition, $M_{\mathrm{NL}}$ is minimized over fully local product unitary operations $\bigotimes_{i=1}^{N} U_i$. Since this set trivially contains the identity operator, $M_{\mathrm{NL}}$ is upper-bounded by the total magic resource $M$. Conversely, the fully local product unitary operations form a subset of the bipartite local unitary operations $U_A\otimes U_B$, $M_{\mathrm{NL}}$ is lower-bounded by $M_{\mathrm{bp}}$. Therfore,
\begin{equation}
\label{eq:hierarchy}
M_{\mathrm{bp}}\!\left(|\mathrm{GHZ}_N(\theta)\rangle\right)\le M_{\mathrm{NL}}\!\left(|\mathrm{GHZ}_N(\theta)\rangle\right)\le M\!\left(|\mathrm{GHZ}_N(\theta)\rangle\right).
\end{equation}

The generalized GHZ state can be constructed from the pure entangled qubit pair $|\phi_\theta\rangle_{ab} = \cos\theta\,|00\rangle + \sin\theta\,|11\rangle$ by appending the $(N-2)$-qubit stabilizer ancilla $|0\rangle^{\otimes (N-2)}$ and sequentially applying CNOT gates to spread the correlation across the system~\cite{ManyBodymagic}. Because the SRE is invariant under Clifford unitary operations and appending stabilizer ancillas, the total magic resource is preserved. Thus,
\begin{equation}
\label{eq:GHZ_SRE_from_pair}
M\!\left(|\mathrm{GHZ}_N(\theta)\rangle\right)
=
M\!\left(|\phi_\theta\rangle_{ab}\right)
=
\ln\!\left(\frac{8}{7+\cos 8\theta}\right).
\end{equation}

Notably, Eq.~\eqref{eq:GHZ_bipartite_magic}, \eqref{eq:hierarchy}, and \eqref{eq:GHZ_SRE_from_pair} show that the lower and upper bounds on $M_{\mathrm{NL}}$ coincide. Hence,
\begin{equation}
\label{eq:GHZ_magic_collapse}
M_{\mathrm{bp}}\!\left(|\mathrm{GHZ}_N(\theta)\rangle\right)
=
M_{\mathrm{NL}}\!\left(|\mathrm{GHZ}_N(\theta)\rangle\right)
=
M\!\left(|\mathrm{GHZ}_N(\theta)\rangle\right).
\end{equation}
This result reveals an intriguing hierarchy collapse for generalized GHZ states. The BNMR extracted from any bipartition already equals the global nonlocal magic resource and the total magic resource of the entire state. This is nontrivial because the correlations of a generalized GHZ state are intrinsically $N$-partite. Tracing out any single qubit can destroy all entanglement and magic resource of the system, leaving the remaining state as a separable classical mixture of stabilizer states. One might therefore expect the global nonlocal magic resource to contain an irreducible multipartite component that cannot be recovered from a single bipartition. Instead, the result shows that the entire nonlocal magic resource is encoded in a rank-$2$ global logical structure, which is fully resolved by every bipartition.

\begin{figure}[htbp]
\centering
\includegraphics[scale=0.5]{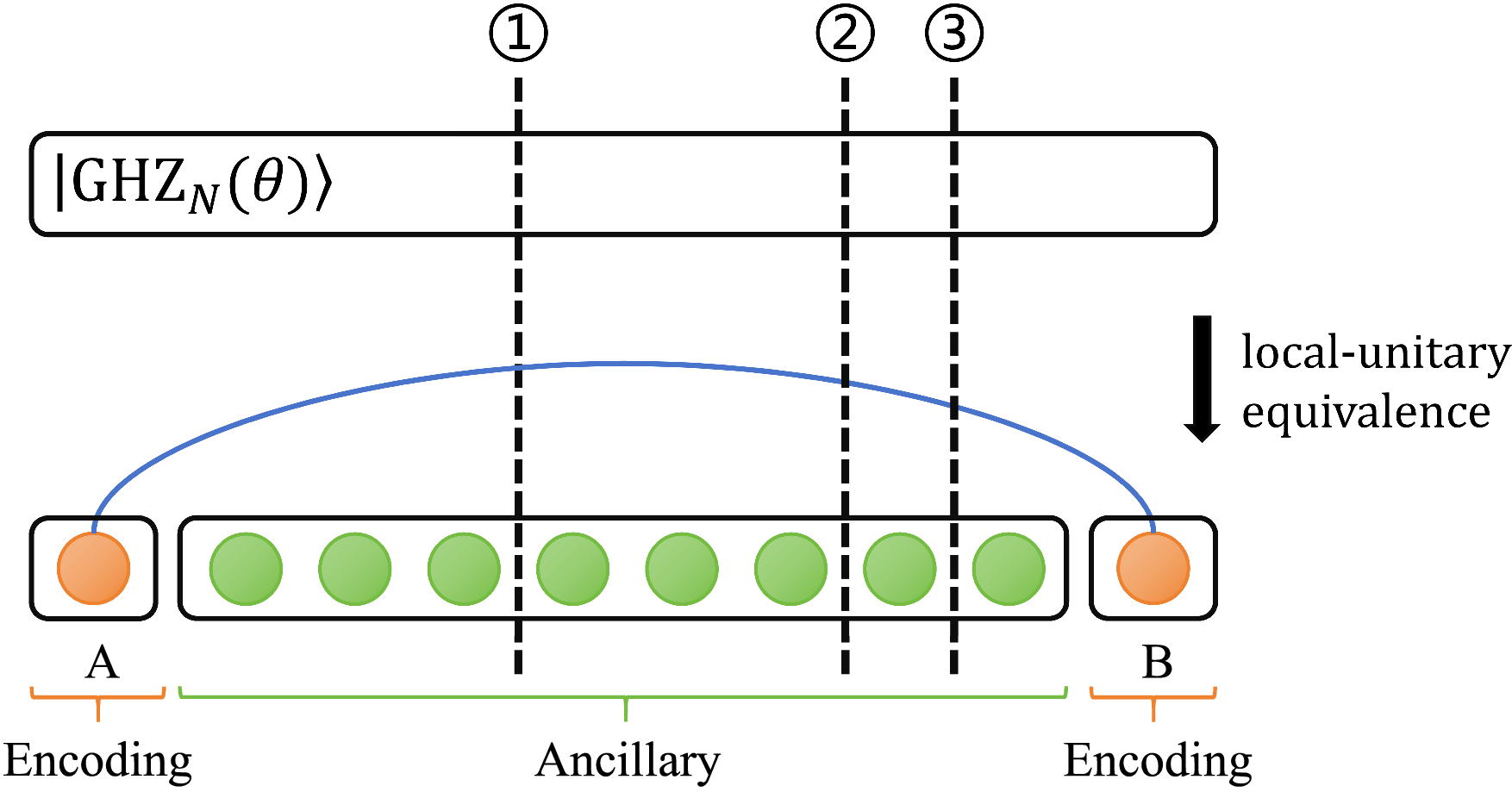}
\caption{End-to-end canonical encoding of the generalized GHZ state. Local SWAP operations move the two encoded qubits to the outermost sites without changing the BNMR. Every contiguous bipartition resolves the same rank-$2$ logical entanglement structure, leading to the collapse $M_{\mathrm{bp}}=M_{\mathrm{NL}}$.}
\label{fig:GHZ}
\end{figure}

The hierarchy collapse also admits an intuitive interpretation from the canonical encoding perspective. In Fig.~\ref{fig:Bell}, the minimal encoding core was placed on two adjacent qubits across the cut. Since SWAP gates acting within each subsystem are local Clifford operations, the encoded qubits can instead be moved to the two ends of the chain, as illustrated in Fig.~\ref{fig:GHZ}. In this representation, the generalized GHZ state can be viewed as a single end-to-end rank-$2$ logical structure, that is, a long-range entangled qubit pair carrying the entire global nonlocal magic resource. Every contiguous cut intersects the same structure and therefore extracts the same BNMR. Because the generalized GHZ state is permutation symmetric, the same picture can extend to arbitrary bipartitions.

\section{Conclusion}

In conclusion, we have established an intrinsic spectral framework for characterizing bipartite nonlocal magic resource (BNMR) in pure quantum states. By introducing a canonical encoding scheme, we demonstrate that BNMR can be exactly compressed into a minimal Schmidt-support core. This dimension reduction proves that pure-state BNMR is an explicit function of the nonzero Schmidt spectrum, extending its invariance from local unitary transformations to local isometries. Consequently, the entanglement spectrum is revealed not merely as a descriptor of bipartite entanglement, but as the fundamental structure dictating irreducible nonstabilizerness.

Adopting this spectral universality, we derived the exact leading-order response of BNMR near its zero-resource set, proving that deviations from stabilizer-compatible flat spectra generate BNMR quadratically with a fixed leading coefficient. Applying this perturbative scaling to Haar-random states yields an analytic BNMR profile that is sharply localized at the symmetric bipartition and exponentially suppressed toward asymmetric cuts. This localized behavior stands in stark contrast to the broadening Page curve of entanglement. Furthermore, our closed-form analysis of Schmidt rank-2 states uncovers a striking hierarchy collapse in generalized GHZ states, implying that global nonlocal magic resource can be entirely encoded within a minimal logical entanglement structure.

Consequently, our results identify the Schmidt spectrum as the precise microscopic bridge between entanglement and nonlocal magic resource. This perspective indicates that the distribution of magic resource in many-body systems is governed by the concrete spectral geometry of entanglement. Our framework provides a concise route for analyzing BNMR in structured many-body states and paves the way for understanding how bipartite and genuinely multipartite nonlocal magic resources jointly organize quantum complexity.

\begin{acknowledgments}
The authors gratefully acknowledge support from the National Natural Science Foundation of China (Grant No.~12374107).
\end{acknowledgments}

\appendix
\section{Proof of Lemma 2} \label{sec:proof-lemma-2}

Fix a stabilizer state $|s_0\rangle$. Any normalized pure state $|\psi\rangle$ sufficiently close to $|s_0\rangle$ can be written as
\begin{equation}
    |\psi\rangle = \sqrt{1-t}\,|s_0\rangle + \sqrt{t}\,|\phi\rangle, \qquad \langle s_0|\phi\rangle = 0,
    \label{eq:SM-local-decomposition}
\end{equation}
where
\begin{equation}
    t = 1 - |\langle s_0|\psi\rangle|^2 \ll 1.
    \label{eq:SM-t-definition}
\end{equation}
Define the phase-free Pauli support of $|s_0\rangle$ by
\begin{equation}
    \mathcal{S}_{s_0} = \left\{ P\in\mathcal{P}_N : \langle s_0|P|s_0\rangle = \pm 1 \right\}, \qquad \sigma_P = \langle s_0|P|s_0\rangle.
    \label{eq:SM-Pauli-support}
\end{equation}
For $P\in\mathcal{S}_{s_0}$, one has $P|s_0\rangle = \sigma_P|s_0\rangle$ with $\sigma_P = \pm 1$. The cross term in $\langle\psi|P|\psi\rangle$ vanishes since
\begin{equation}
    \langle s_0|P|\phi\rangle = \sigma_P\langle s_0|\phi\rangle = 0.
\end{equation}
Consequently, we have
\begin{equation}
    \langle\psi|P|\psi\rangle = \sigma_P(1-t) + t\langle\phi|P|\phi\rangle, \qquad P\in\mathcal{S}_{s_0},
    \label{eq:SM-P-expectation-support}
\end{equation}
and hence
\begin{equation}
    \langle\psi|P|\psi\rangle^4 = 1 + 4t\left(\sigma_P\langle\phi|P|\phi\rangle - 1\right) + O(t^2).
    \label{eq:SM-fourth-power-support}
\end{equation}

To evaluate the sum over $\mathcal{S}_{s_0}$, let $G_1,\ldots,G_N$ be a set of independent stabilizer generators of $|s_0\rangle$, satisfying
\begin{equation}
    G_j|s_0\rangle = |s_0\rangle, \qquad j=1,\ldots,N.
\end{equation}
The projector onto their common $+1$ eigenspace is
\begin{equation}
    \Pi_{s_0} = \prod_{j=1}^{N}\frac{I+G_j}{2} = \frac{1}{2^N} \sum_{a_1,\ldots,a_N\in\{0,1\}} G_1^{a_1}\cdots G_N^{a_N}.
    \label{eq:SM-stabilizer-projector-product}
\end{equation}
Since $|s_0\rangle$ is a pure stabilizer state, the common eigenspace is one-dimensional, yielding
\begin{equation}
    \Pi_{s_0} = |s_0\rangle\langle s_0|.
    \label{eq:SM-projector-rank-one}
\end{equation}
Equivalently, using the phase-free Pauli support $\mathcal{S}_{s_0}$ defined above, each stabilizer-group element can be written uniquely as $\sigma_P P$ with $P\in\mathcal{S}_{s_0}$. Therefore,
\begin{equation}
    2^{-N}\sum_{P\in\mathcal{S}_{s_0}}\sigma_P P = \Pi_{s_0} = |s_0\rangle\langle s_0|, \qquad |\mathcal{S}_{s_0}| = 2^N.
    \label{eq:SM-stabilizer-projector}
\end{equation}
Using this relation, it follows that
\begin{equation}
    \sum_{P\in\mathcal{S}_{s_0}} \sigma_P\langle\phi|P|\phi\rangle = 2^N|\langle s_0|\phi\rangle|^2 = 0,
\end{equation}
and therefore
\begin{equation}
    \sum_{P\in\mathcal{S}_{s_0}} \langle\psi|P|\psi\rangle^4 = 2^N(1-4t) + O(t^2).
    \label{eq:SM-support-sum}
\end{equation}
For $P\notin\mathcal{S}_{s_0}$, a stabilizer state has $\langle s_0|P|s_0\rangle = 0$. Equation~\eqref{eq:SM-local-decomposition} then gives
\begin{equation}
    \langle\psi|P|\psi\rangle = 2\sqrt{t(1-t)}\,\operatorname{Re}(\langle s_0|P|\phi\rangle) + t\langle\phi|P|\phi\rangle = O(\sqrt{t}),
\end{equation}
so that $\langle\psi|P|\psi\rangle^4 = O(t^2)$. Since $N$ is fixed, $\mathcal{P}_N$ is finite, and the total contribution from $P\notin\mathcal{S}_{s_0}$ remains $O(t^2)$. Combining the two sectors yields
\begin{equation}
    2^{-N}\sum_{P\in\mathcal{P}_N} \langle\psi|P|\psi\rangle^4 = 1 - 4t + O(t^2).
    \label{eq:SM-Pauli-moment-expansion}
\end{equation}
Substituting Eq.~\eqref{eq:SM-Pauli-moment-expansion} into the definition of the stabilizer 2-R\'{e}nyi entropy gives
\begin{equation}
    M_2(|\psi\rangle) = -\ln\left[1 - 4t + O(t^2)\right].
\end{equation}
Expanding the logarithm around unity, we obtain
\begin{equation}
    M_2(|\psi\rangle) = 4t + O(t^2) = 4\left(1 - |\langle s_0|\psi\rangle|^2\right) + O\bigl( (1 - |\langle s_0|\psi\rangle|^2)^2 \bigr),
    \label{eq:SM-lemma-2-local-expansion}
\end{equation}
which proves the first statement of Lemma~2.

For a fixed number of qubits, there are only finitely many physically distinct pure stabilizer states. Therefore, the local expansion in Eq.~\eqref{eq:SM-lemma-2-local-expansion} can be made uniform over the stabilizer set. Now consider a pure state $|\psi\rangle$ lying in this uniform neighborhood of the stabilizer set. Let $|\tau_*\rangle\in\mathrm{Stab}$ be a closest stabilizer state in fidelity, namely,
\begin{equation}
    |\langle\tau_*|\psi\rangle|^2 = \max_{|\tau\rangle\in\mathrm{Stab}} |\langle\tau|\psi\rangle|^2.
\end{equation}
Then, we define $\delta_{\mathrm{stab}}(|\psi\rangle) = 1 - |\langle\tau_*|\psi\rangle|^2$. Applying the uniform expansion around $|\tau_*\rangle$ gives, for some constant $C_1>0$,
\begin{equation}
    M_2(|\psi\rangle) \geq 4\,\delta_{\mathrm{stab}}(|\psi\rangle) - C_1\,\delta_{\mathrm{stab}}(|\psi\rangle)^2.
    \label{eq:SM-uniform-stabilizer-lower-bound}
\end{equation}
This proves the second statement of Lemma~2.

\section{Proof of Lemma 3}
\label{sec:proof-lemma-3}

We first recall a standard overlap bound for bipartite pure states. Let $|\Phi\rangle$ and $|\Psi\rangle$ be two bipartite pure states with Schmidt spectra $\boldsymbol{\alpha}$ and $\boldsymbol{\beta}$, respectively. Under the state-matrix correspondence, the singular values of their coefficient matrices are $\{\sqrt{\alpha_i}\}$ and $\{\sqrt{\beta_i}\}$. The von Neumann trace inequality~\cite{Horn_Johnson_1985} then gives
\begin{equation}
    |\langle\Phi|\Psi\rangle| \leq \sum_i \sqrt{\alpha_i^{\downarrow}\beta_i^{\downarrow}},
    \label{eq:SM-Schmidt-overlap-bound}
\end{equation}
where $\downarrow$ denotes nonincreasing order, and zero padding is understood when the Schmidt ranks are different.

We now apply this bound to the state $|\Psi\rangle_{AB}$ in Lemma~3. Its Schmidt spectrum is $\boldsymbol{\lambda} = \boldsymbol{u}_r + \boldsymbol{\eta}$, with Schmidt rank $r=2^n$. Let $|\tau\rangle\in\mathrm{Stab}$ be an arbitrary bipartite stabilizer state, and denote its Schmidt rank under the same fixed bipartition by $q$. The bipartite normal form of stabilizer states implies that $q=2^m$ for some integer $m\geq 0$, and that the nonzero Schmidt spectrum of $|\tau\rangle$ is the flat spectrum
\begin{equation}
    \boldsymbol{u}_q = \left( \frac{1}{q},\ldots,\frac{1}{q} \right).
\end{equation}
Therefore, Eq.~\eqref{eq:SM-Schmidt-overlap-bound} gives
\begin{equation}
    |\langle\tau|\Psi\rangle_{AB}|^2 \leq F_q(\boldsymbol{\lambda}),
    \label{eq:SM-rank-q-fidelity-bound}
\end{equation}
where
\begin{equation}
    F_q(\boldsymbol{\lambda}) = \frac{1}{q} \Biggl( \sum_{i=1}^{\min\{q,r\}} \sqrt{\lambda_i^{\downarrow}} \Biggr)^2 .
    \label{eq:SM-Fq-definition}
\end{equation}
Here $F_q(\boldsymbol{\lambda})$ denotes the Schmidt-spectrum upper bound on the squared overlap between $|\Psi\rangle_{AB}$ with Schmidt spectrum $\boldsymbol{\lambda}$, and any stabilizer state with Schmidt rank $q$. The subscript $q$ labels the Schmidt rank of the candidate stabilizer state, while the input $\boldsymbol{\lambda}$ is the Schmidt spectrum of the fixed state $|\Psi\rangle_{AB}$.

The reason for keeping $q$ arbitrary is that the stabilizer set contains states with different Schmidt ranks. To lower bound the distance from $|\Psi\rangle_{AB}$ to the entire stabilizer set, we must exclude the possibility that a stabilizer state with $q\neq r$ has a larger fidelity with $|\Psi\rangle_{AB}$ than the same-rank flat sector with Schmidt rank $r$.

This exclusion follows by comparing the two flat spectra $\boldsymbol{u}_r$ and $\boldsymbol{u}_q$. Setting $\boldsymbol{\lambda}=\boldsymbol{u}_r$ in Eq.~\eqref{eq:SM-Fq-definition}, we obtain
\begin{equation}
    F_q(\boldsymbol{u}_r) =
    \begin{cases}
        q/r, & q<r, \\
        1,   & q=r, \\
        r/q, & q>r.
    \end{cases}
    \label{eq:SM-flat-spectrum-comparison}
\end{equation}
Since both $q$ and $r$ are powers of two, any $q\neq r$ differs from $r$ by at least a factor of two. Hence
\begin{equation}
    F_q(\boldsymbol{u}_r) \leq \frac{1}{2}, \qquad q\neq r,
\end{equation}
whereas
\begin{equation}
    F_r(\boldsymbol{u}_r) = 1.
\end{equation}
Thus, at the flat spectrum $\boldsymbol{u}_r$, the same-rank flat sector with Schmidt rank $r$ is separated from all different-rank flat sectors by a finite fidelity gap.

Since the Hilbert space is finite dimensional, only finitely many stabilizer Schmidt ranks $q$ can occur. By continuity, this finite gap persists in a sufficiently small neighborhood of $\boldsymbol{u}_r$. Therefore, there exist such a neighborhood and a constant $c_0<1$ such that, for all $\boldsymbol{\lambda}$ in this neighborhood,
\begin{equation}
    F_q(\boldsymbol{\lambda}) \leq c_0 < F_r(\boldsymbol{\lambda}), \qquad q\neq r.
    \label{eq:SM-same-rank-dominates}
\end{equation}
Consequently, for any stabilizer state $|\tau\rangle$, regardless of its Schmidt rank $q$, Eq.~\eqref{eq:SM-rank-q-fidelity-bound} implies
\begin{equation}
    |\langle\tau|\Psi\rangle_{AB}|^2 \leq F_q(\boldsymbol{\lambda}) \leq F_r(\boldsymbol{\lambda}).
\end{equation}
Taking the maximum over all stabilizer states gives
\begin{equation}
    \max_{|\tau\rangle\in\mathrm{Stab}} |\langle\tau|\Psi\rangle_{AB}|^2 \leq F_r(\boldsymbol{\lambda}).
    \label{eq:SM-max-fidelity-Fr}
\end{equation}
Therefore,
\begin{equation}
    \delta_{\mathrm{stab}}(|\Psi\rangle_{AB}) = 1 - \max_{|\tau\rangle\in\mathrm{Stab}} |\langle\tau|\Psi\rangle_{AB}|^2 \geq 1 - F_r(\boldsymbol{\lambda}).
    \label{eq:SM-delta-lower-Fr}
\end{equation}

It remains to expand $F_r(\boldsymbol{\lambda})$ for $\boldsymbol{\lambda}=\boldsymbol{u}_r+\boldsymbol{\eta}$. Since $\lambda_i=1/r+\eta_i$ and $\|\boldsymbol{\eta}\|_2\ll 1$, we have
\begin{equation}
    \sqrt{\lambda_i} = \frac{1}{\sqrt{r}} + \frac{\sqrt{r}}{2}\eta_i - \frac{r^{3/2}}{8}\eta_i^2 + O(|\eta_i|^3).
    \label{eq:SM-sqrt-expansion}
\end{equation}
Using $\sum_{i=1}^r\eta_i=0$, we obtain
\begin{equation}
    \sum_{i=1}^r\sqrt{\lambda_i} = \sqrt{r} - \frac{r^{3/2}}{8}\|\boldsymbol{\eta}\|_2^2 + O(\|\boldsymbol{\eta}\|_2^3).
    \label{eq:SM-sqrt-sum}
\end{equation}
Therefore,
\begin{align}
    F_r(\boldsymbol{\lambda})
    &= \frac{1}{r} \Biggl( \sum_{i=1}^r\sqrt{\lambda_i} \Biggr)^2 \notag\\
    &= \frac{1}{r} \biggl( \sqrt{r} - \frac{r^{3/2}}{8}\|\boldsymbol{\eta}\|_2^2 + O(\|\boldsymbol{\eta}\|_2^3) \biggr)^2 \notag\\
    &= 1 - \frac{r}{4}\|\boldsymbol{\eta}\|_2^2 + O(\|\boldsymbol{\eta}\|_2^3).
    \label{eq:SM-Fr-expansion}
\end{align}
Hence
\begin{equation}
    1 - F_r(\boldsymbol{\lambda}) = \frac{r}{4}\|\boldsymbol{\eta}\|_2^2 + O(\|\boldsymbol{\eta}\|_2^3).
\end{equation}
Equivalently, for some constant $C_2>0$ and all sufficiently small $\|\boldsymbol{\eta}\|_2$,
\begin{equation}
    1 - F_r(\boldsymbol{\lambda}) \geq \frac{r}{4}\|\boldsymbol{\eta}\|_2^2 - C_2\|\boldsymbol{\eta}\|_2^3.
    \label{eq:SM-one-minus-Fr}
\end{equation}
Combining Eqs.~\eqref{eq:SM-delta-lower-Fr} and \eqref{eq:SM-one-minus-Fr}, we obtain
\begin{equation}
    \delta_{\mathrm{stab}}(|\Psi\rangle_{AB}) \geq \frac{r}{4}\|\boldsymbol{\eta}\|_2^2 - C_2\|\boldsymbol{\eta}\|_2^3,
\end{equation}
which proves Lemma~3.

\bibliography{BNMR}

\end{document}